\DeclareMathOperator*{\argmax}{arg\,max}
\newtheorem{definition}{Definition}
\newtheorem{proposition}{Proposition}
\title[Case Description Information and Sensitivity Analysis]{Using Case Description Information to Reduce Sensitivity to Bias for the Attributable Fraction Among the Exposed}
\author[Kan Chen ]{Kan Chen \thanks{Email: {\tt kanchen@hsph.harvard.edu}} }
\address{Harvard University, Boston, Massachusetts, U.S.A.}
\author{Jing Cheng \thanks{Email: {\tt 	Jing.Cheng@ucsf.edu}} }
\address{University of California San Francisco, San Francisco, California, U.S.A.}
\author{M.Elizabeth Halloran \thanks{Email: \tt betz@fredhutch.org} }
\address{Fred Hutchinson Cancer Center, Seattle, Washington, U.S.A.}
\author{Dylan S. Small \thanks{Email: {\tt dsmall@wharton.upenn.edu}}}
\address{University of Pennsylvania, Philadelphia, Pennsylvania, U.S.A.}
\begin{document}
\begin{abstract}
The attributable fraction among the exposed (\textbf{AF}$_e$) is the proportion of disease cases among the exposed that could be avoided by eliminating the exposure. In this paper, we propose a new approach to reduce sensitivity to hidden bias for conducting statistical inference on the \textbf{AF}$_e$ by leveraging case description information such as subtype of cancer. The proposed method is examined through an asymptotic tool, design sensitivity, simulation studies, and case studies of alcohol consumption and the risk of postmenopausal invasive breast cancer utilizing information on the subtype of cancer using data from the Women's Health Initiative Observational Study allowing the possibility that leveraging case definition information may introduce selection bias through an additional sensitivity parameter.
\end{abstract}

\textsf{{\bf Keywords}: Attributable Effects; Attributable Fraction Among the Exposed; Fisher's Combination of $P$-values; Observational Studies; Sensitivity Analysis;  Selection Bias }

\section{Introduction}
\label{sec: intro}

\subsection{Background: The Attributable Fraction Among the Exposed}

 We consider performing statistical inference for the attributable fraction among the exposed (\textbf{AF}$_e$). The attributable fraction among the exposed, also known as the attributable risk (\textbf{AR}) \citep{levin1953occurrence, walter1975distribution, uter1999concept}, is the proportion of cases in the exposed group that are attributable to the risk factor. It represents the proportion of disease cases that could be avoided if this risk factor were eliminated from a population. To assess the public health importance of controlling a selected risk factor, it is common to use \textbf{AF}$_e$ as an epidemiological measure for determining the severity of a harmful exposure among those who have been exposed.


Over the past few decades, statistical inference on the attributable fraction among the exposed has mostly focused on the inference of $\textbf{AF}_e$ under the assumption of \emph{no unmeasured confounding}. Examples of methods for inference that have been implemented include using Wald’s statistic \citep{kuritz1987attributable, kuritz1988attributable}, logistic regression \citep{drescher1991attributable}, logit transformation \citep{leung1981comparisons}, and maximum likelihood estimation \citep{greenland1993maximum}. In observational studies, because no one can guarantee the results are unaffected by unobserved covariates, the inference for $\textbf{AF}_e$ requires a sensitivity analysis that assesses the sensitivity of causal conclusions to the violation of the \emph{no unmeasured confounder} assumption.

\subsection{Sensitivity Analysis: No Unmeasured Confounding Assumption}


To conduct a sensitivity analysis in inference of the $\textbf{AF}_e$ in the presence of unobserved covariates, throughout this paper, we adapt Rosenbaum's odds ratio framework \citep{rosenbaum1987sensitivity}. \citet{rosenbaum1987sensitivity} derives bounds on the $P$-value of a test for whether there is a treatment effect given an upper limit on the magnitude of bias, say $\Gamma$, on the odds ratio of the probability of treatment assignment for a set of units matched on observed covariates. When $\Gamma = 1$, the \emph{no unmeasured confounder} assumption holds.

No unmeasured confounding is somehow the most favorable situation; however, a researcher, in practice, cannot know if he/she is in such a position. Conducting a robust inference in the sense that causal conclusions are insensitive to small or moderate hidden bias measured by $\Gamma$ has increasingly attracted attention from researchers and practitioners alike (\citet{stuart2013commentary}).  



\subsection{Our Contribution: Using Case Description Information to Reduce Sensitivity to Hidden Bias, and Sensitivity Analysis for Selection Bias}

In this paper, we present a novel approach to reducing sensitivity to hidden bias in inference for the $\textbf{AF}_e$ by leveraging case description information. Case description information refers to information that describes types of cases. As a motivating example, we consider inference for the $\textbf{AF}_e$ of alcohol consumption on postmenopausal invasive breast cancer where we will use case description information on whether the breast cancer is hormone sensitive or hormone insensitive. 

Alcohol consumption is thought to be a risk factor for breast cancer, and various mechanisms have been hypothesized. While some mechanisms would apply to both hormone sensitive and hormone insensitive invasive breast cancer, the following potentially important mechanism would apply only to hormone sensitive mechanisms. Alcohol consumption increases the risk of various hormones such as estrogen. Hormone sensitive breast cancer cells have receptors that allow them to use hormones to grow, thus increased levels of hormones from alcohol consumption may increase the chance of hormone sensitive invasive breast cancer \citep{liu2015links}. \citet{li2010alcohol} provide evidence that alcohol consumption is more strongly associated with hormone sensitive breast cancer than hormone insensitive breast cancer using data from the Women's Health Initiative Observational Study. 

Prior work that has examined the $\textbf{AF}_e$ of alcohol consumption on breast cancer such as \citet{tseng1999calculation} has not leveraged any distinction between hormone sensitive and hormone insensitve breast cancer. The goal of this paper is to develop a method for leveraging this case description information to reduce sensitivity to bias from unmeasured confounding in inference for the $\textbf{AF}_e$.  We will examine data from the Women's Health Initiative (WHI) prospective cohort Observational Study (OS) in 2014 in Section \ref{sec: case-studies}.

Using case description information may introduce selection bias because the treatment assignment might affect the type of case a unit is.  In the motivating example, there would be selection bias if drinking 18 or more oz alcohol a week could cause a woman who would have gotten hormone insensitive breast cancer if she hadn't drunk 18 or more oz drinks a week to instead get hormone sensitive breast cancer.  To account for the possibility of selection bias, we add a sensitivity parameter to the sensitivity analysis.  

\subsection{Prior Work and Outline}

In a case-referent study which compares cases who have some disease to referents who do not have the disease \citep{mantel1973synthetic, breslow1980statistical, holland1987causal}, \citet{small2013case} proposed a framework further defining a case to be a broad case or a narrow case by the magnitude of the response of each case. The proposed methodology can reduce sensitivity to bias by matching narrow cases to controls. This is because, under certain simple models, a narrower case definition yields a greater design sensitivity with fewer cases and is hence less sensitive to the hidden bias in a large sample. Therefore, testing procedures such as the Mantel-Haenszel and Aberrant rank test are less sensitive to hidden bias when applied to narrow cases than broad cases. A similar phenomenon exists when there is effect modification which is when there is a different treatment in different subgroups defined by a pretreatment covariate(s).  In the presence of effect modification, it is possible that the results are less sensitive to bias in a subgroup experiencing larger treatment effects. \citet{hsu2013effect} discovered that when the magnitude of effect changes with an observed covariate that is controlled by matching with moderate stability, the results may be less sensitive to bias for a subset of pairs defined by the observed covariate. This is saying that assuming other things are equal, a greater effect tends to be less sensitive to hidden bias than a smaller effect. However, these articles only consider selecting units based on characteristics that are unaffected by the exposure.  More and more attention has been focused on the danger of selection bias when units are selected based on characteristics that are affected by the exposure. Such research can be found in \citet{zhao2020note} and \citet{knox2020administrative} for example.

The rest of this article is organized as follows. Section \ref{sec: review} introduces the setting and notation as well as a review of the attributable effect and a sensitivity analysis framework for unmeasured confounding. Section \ref{sec: method} proposes our methodology to conduct inference on the attributable fraction among the exposed and a sensitivity analysis framework for selection bias. Section \ref{sec: simulation} demonstrates the simulation results for a power study. We implement our proposed method with the motivating example in Section \ref{sec: case-studies}. Section \ref{sec: conclusion} presents a discussion. Proofs and additional simulation results can be found in the Appendix. Code is available on \url{https://github.com/dosen4552/Attributable-Fraction-Sensitivity-to-Bias}.

\section{Notation and Review: Attributable Effect and Sensitivity Analysis for Hidden Bias} 
\label{sec: review}

\subsection{Notation and Setting}

We consider the setting of matched case-referent studies. In the population of interest, there are $I$ matched sets where the case in each matched set $i = 1, 2, \cdots, I$ is matched with fixed $J - 1 \geq 1$ controls. We use $ij, i = 1, \cdots, I, j = 1, \cdots, J$ to index the $j$-th unit in the $i$-th matched set, and $I \times J = N$. Let $\mathbf{x}_{ij}$ and $ Z_{ij}$ denote the observed covariates and the treatment indicator for unit $ij$, respectively.

Unit $ij$ has two potential outcomes, $r_{T_{ij}}$ and $r_{C_{ij}}$. They are the outcomes that unit $ij$ would have under the treatment and control, respectively. Therefore, $r_{T_{ij}}$ is observed if $ Z_{ij} = 1$ and $r_{C_{ij}}$ is observed if $ Z_{ij} = 0$ (\citet{fisher1923studies}; Neyman (1923)). We further denote the observed outcome $R_{ij} = Z_{ij} r_{T_{ij}} + (1 - Z_{ij}) r_{C_{ij}}$, and let $Z_{i^+} = \sum_{j=1}^J Z_{ij}$ be the number of treated units in matched set $i$. In matched case-referent studies, $\sum_{j = 1}^J R_{ij} = 1$ for every matched set $i$. Without loss of generality, we assume the first unit in each matched set is a case (i.e., $R_{i1} = 1$ for all $i$). The vectors $\bm Z = \{Z_{11},\cdots,Z_{IJ} \}$ and $\bm R = (R_{11}, \cdots, R_{IJ})$ are both observed and they represent the treatment assignment and the observed outcome of all units. $\mathcal{Z}$ is the set of all possible values of $\bm Z$ with $\sum_{j=1}^J Z_{ij} = Z_{i^+}$ for $i = 1, \cdots, I$ and $|\mathcal{Z}| = \prod_{i=1}^I \binom{J}{Z_{i^+}}$. Also, we denote $u_{ij}$ as an unobserved normalized covariate for unit $ij$ (the normalization is that $0 \leq u_{ij} \leq 1$), $\bm u_i = (u_{i1}, u_{i2}, \cdots, u_{iJ})$  and $\mathcal{F} = \{\mathbf{x}_{ij}, u_{ij}, i = 1, \cdots, I, j = 1, \cdots, J \}$ as the set of \emph{fixed} observed and unobserved covariates. We assume matching has controlled the observed covariates in the sense that $\mathbf{x}_{ij} = \mathbf{x}_{ij'}$ for $1 \leq j < j' \leq J$ in each matched set $i$, but matched subjects could differ in their unobserved normalized covariates $u_{ij}$. 

Throughout this paper, we follow the same assumption as \citet{rosenbaum2001effects}; that is, the treatment has nonnegative effect (i.e. $r_{T_{ij}} \geq r_{C_{ij}}$). In the breast cancer study, we consider an outcome of 1 is invasive breast cancer and the treatment (exposure) is drinking alcohol 18+ oz per week (with the control being not drinking alcohol 18+ oz per week), so the nonnegative treatment effect assumption is saying that drinking alcohol 18+ oz per week compared to not drinking alcohol 18+ per week never prevents breast cancer. 

We consider sign-score statistics which are defined below: 

\begin{definition}
A sign-score statistic for a binary response has the following form
\begin{align*}
        t(\mathbf{Z}, \mathbf{R}) = \sum_{i=1}^I \sum_{j = 1}^{J} R_{ij} Z_{ij} = \sum_{i = 1}^I  C_i, \qquad \text{where $C_i = \sum_{j=1}^{J}R_{ij}Z_{ij}$}.
\end{align*}
\end{definition}

\subsection{Sensitivity Analysis: Hidden Bias} \label{subsec: sens}

In observational studies, if a study is free of hidden bias, then two units with the same observed covariates have the same probability of receiving treatment. There is \emph{hidden bias} if two units with the same covariates have different probability of receiving treatment (i.e. $\mathbf{x}_{ij} = \mathbf{x}_{ij'}$ but $\mathbb{P} (Z_{ij} = 1 | \mathcal{F}, \mathcal{Z}) \neq \mathbb{P}(Z_{ij'} = 1 | \mathcal{F},\mathcal{Z})$ for $i = 1, \cdots, I, 1 \leq j < j' \leq J$). A sensitivity analysis asks how the causal conclusion would be altered by hidden bias of various magnitudes. We adapt the sensitivity analysis framework from \citet{rosenbaum2002overt}. Let $\pi_{ij} = \mathbb{P} (Z_{ij} = 1 | \mathcal{F})$. The odds that unit $ij$ and unit $ij'$ receive the treatment are $\pi_{ij} / (1 - \pi_{ij})$ and $\pi_{ij'} / (1 - \pi_{ij'})$, respectively. We consider a sensitivity parameter $\Gamma \geq 1$ that bounds the amount of hidden bias:
\begin{equation} \label{odds: hidden bias}
    \frac{1}{\Gamma} \leq \frac{\pi_{ij} (1 - \pi_{ij'})}{\pi_{ij'} (1 - \pi_{ij})} \leq \Gamma \qquad \text{for $i = 1, \cdots, I, 1 \leq j < j' \leq J$ with $\mathbf{x}_{ij} = \mathbf{x}_{ij'}$}. 
\end{equation}
Here,  $\Gamma$ is a measure of how much unobserved confounding is present. The sensitivity value (the smallest $\Gamma$ at which the p-value would become 0.05) is a measure of how much unobserved confounding would need to be present to alter the conclusions from the observed data \citep{zhao2018sensitivity}. \citet{rosenbaum2002overt} also proposed the sensitivity model
\begin{equation} \label{model: hidden bias}
    \log (\frac{\pi_{ij}}{1 - \pi_{ij}}) = \alpha (\mathbf{x}_{ij}) + \log (\Gamma) u_{ij} \qquad \text{with $0 \leq u_i \leq 1$}
\end{equation}
where $\alpha (\cdot)$ is an unknown function. Proposition 12 in Chapter 4 from \citet{rosenbaum2002overt} states that (\ref{odds: hidden bias}) is equivalent to the existence of model (\ref{model: hidden bias}). Note that when $\Gamma = 1$, unit $ij$ and unit $ij'$ with the same observed covariates have the same odds of receiving treatment, so there is no hidden bias.

In matched case-referent studies, under Fisher's sharp null $H_0: r_{T_{ij}} = r_{C_{ij}}$ and sensitivity model (\ref{model: hidden bias}),  \citet{rosenbaum1991sensitivity} shows that
\begin{align*}
     \tilde{p}_i = \frac{ Z_{i^+}  }{Z_{i^+} + (J - Z_{i^+})\Gamma} \leq \mathbb{P}(C_i = 1 | \mathcal{F}, \mathcal{Z} ) \leq \frac{\Gamma Z_{i^+} }{\Gamma Z_{i^+} + (J - Z_{i^+})} = \tilde{\tilde{p}}_i.
\end{align*}
The bounds are sharp in the sense that the upper bound is attained at $\bm u_i = (1,0, \cdots, 0)$ and the lower bound is attained at $\bm u_i = (0, 1, \cdots, 1)$. Let $\tilde{C}_i$ be the Bernoulli random variable with success probability $\tilde{p}_i$, and $\tilde{\tilde{C}}_i$ be the Bernoulli random variable with success probability $\tilde{\tilde{p}}_i$, then under $H_0: r_{T_{ij}} = r_{C_{ij}}$, and let $T = t(\mathbf{Z}, \mathbf{R})$, for every $k$, 

\begin{equation} \label{equ: bound on p-value}
    \mathbb{P}(\sum_{i=1}^I \tilde{C}_i  \geq k )  \leq \mathbb{P}(T \geq k | \mathcal{F}, \mathcal{Z} ) \leq \mathbb{P}(\sum_{i=1}^I \tilde{\tilde{C}}_i  \geq k ) .
\end{equation}

Equation (\ref{equ: bound on p-value}) yields sharp bounds on the $P$-value testing $H_0: r_{T_{ij}} = r_{C_{ij}}$ in the presence of unmeasured confounding bias controlled by $\Gamma$. 

\subsection{Review of the Attributable Effect}

The attributable effect is a way to measure the magnitude of effect of a treatment on a binary outcome. It is denoted by $A$ and defined by $A = \sum_{i=1}^I \sum_{j=1}^J Z_{ij} (r_{T_{ij}} - r_{C_{ij}})$, that is the number of treated events actually caused by exposure to the treatment \citep{rosenbaum2001effects, rosenbaum2010design}. In our breast cancer study, the attributable effect is the number of invasive breast cancer cases among women drinking alcohol 18+ oz per week that would not have occurred if those women did not drink alcohol 18+ oz per week. The attributable fraction among the exposed (\textbf{AF}$_e$) is the attributable effect divided by the number of treated cases.

\begin{table}
\caption{\label{table:1} A generic example of $2 \times 2$ contingency table for a pair-matched case-referent study. $+$ and $-$ indicate if a unit in the matched pair was treated or not. }
\centering
\begin{tabular}{ c c c }
\hline
\hline
 Outcome & Control$+$ & Control$-$ \\
 \hline 
 Case$+$ & $a$ & $b$ \\ 
 Case$-$ & $ c$ & $ d$ \\  
 \hline
\end{tabular}

\end{table}

Let $\bm \delta = (\delta_{11}, \cdots, \delta_{IJ})$ be the vector of treatment effects where $\delta_{ij} = r_{T_{ij}} - r_{C_{ij}}$, then $A = \sum_{i=1}^I \sum_{j = 1}^J Z_{ij} \delta_{ij}$. Consider a particular null hypothesis, $H_0: \bm \delta = \bm \delta_0$ for some specific vector $\bm \delta_0 = (\delta_{0_{11}}, \cdots, \delta_{0_{IJ}})$. We call $\bm \delta_0$ ``compatible'' if $\delta_{0_{ij}} = 0$ whenever $(R_{ij}, Z_{ij}) = (0, 1)$ or $(R_{ij}, Z_{ij}) = (1, 0)$ for all $i, j$ because of the nonnegative assumption, and ``incompatible'' otherwise \citep{rosenbaum2001effects}. To understand ``compatible'', since $R_{ij} = Z_{ij} r_{T_{ij}} + (1 - Z_{ij}) r_{C_{ij}}$, if $Z_{ij} = 1$ and $R_{ij} = 0$, then $r_{T_{ij}} = r_{C_{ij}} = 0$ because $r_{T_{ij}} \geq r_{C_{ij}}$. Hence, $\delta_{0_{ij}} = 0$. Similar reason holds for $Z_{ij} = 0$ and $R_{ij} = 1$ because $r_{T_{ij}} = r_{C_{ij}} = 1$.  If the sharp null hypothesis $\bm \delta = \bm \delta_0$ is incompatible, reject it (the type I error rate for incompatible hypotheses is then zero).  If the null hypothesis  $H_0: \bm \delta = \bm \delta_0$ is true, then $r_{C_{ij}}$ will be known for all $i, j$ since $r_{C_{ij}} = R_{ij} - Z_{ij} \delta_{0_{ij}}$. Let $A_0 = \sum_{i=1}^I\sum_{j=1}^J Z_{ij} \delta_{0_{ij}}$, then under the null hypothesis, $T - A_0 = \sum_{i=1}^I \sum_{j=1}^J Z_{ij}r_{C_{ij}}$. If we further denote $B_i = \sum_{j=1}^J Z_{ij} r_{C_{ij}}$ and $r_{C_{i^+}} = \sum_{j=1}^J r_{C_{ij}}$ for the number who would have had events if exposure to treatment had been prevented, then

\begin{align*}
  \tilde{\pi}_i = \frac{ Z_{i^+} r_{C_{i^+}}  }{ Z_{i^+} r_{C_{i^+}} +  \Gamma (J - Z_{i^{+}} r_{C_{i^+}} )  }  \leq \mathbb{P} (B_i = 1) \leq \frac{\Gamma Z_{i^+} r_{C_{i^+}}  }{ \Gamma Z_{i^+} r_{C_{i^+}} + J - Z_{i^{+}} r_{C_{i^+}}  }  = \tilde{\tilde{\pi}}_i,
\end{align*}

and, in particular, in a randomized experiment when $\Gamma = 1$, $\mathbb{P} (B_i = 1) =Z_{i^+} r_{C_{i^+}}/J$. Note that the attributable effect is a random variable rather than a parameter, however, we can still conduct the inference on it by constructing a one-sided prediction interval for $A$.  

To construct a one-sided prediction interval for the attributable effect $A$, starting with $A_0 = 0$, approximate the upper bound of $\mathbb{P} \left(\sum_{i=1}^I B_i \geq T -A_0 \right) $ by

\begin{equation}
\label{equ: approx upper}
                    1 - \Phi \left( \frac{T -A_0 - \sum_{i=1}^I \tilde{\tilde{\pi}}_i }{\sqrt{\sum_{i=1}^I \tilde{\tilde{\pi}}_i (1 - \tilde{\tilde{\pi}}_i)  } }     \right)
\end{equation}
when $I \rightarrow \infty$, and $\Phi (\cdot)$ is the cumulative density function of the standard normal distribution. If (\ref{equ: approx upper}) is less than $\alpha$, $A_0$ should be increased by one until it is not.  Assume that $A_0 = a_*$,  then  $\{A: A \geq a_*\}$ is a one-sided $100 \times (1 - \alpha) \% $ minimum prediction interval for the attributable effect $A$. To better understand, we use a generic example for pair-matched case-referent study from Table \ref{table:1} to demonstrate the calculation. There are $ a + b$ treated cases, and $b$ of these are in discordant pairs. Suppose that the hypothesis $H_0: \bm \delta = \bm \delta_0$ results in exactly $a^*$ cases that are attributable to the exposure, and all $a^*$ are among $b$ discordant pairs, then (\ref{equ: approx upper}) can be re-written as:
\begin{align*}
            1 - \Phi \left( \frac{ ( b - a_*) -  (b+c - a_*) p_{\Gamma}  }{\sqrt{(b+c - a^*)  (1 - p_\Gamma) p_{\Gamma}  } }     \right)
\end{align*}
where $p_{\Gamma} = \frac{\Gamma}{1 + \Gamma}$. Details of the procedure of finding the attributable effect for multiple-controls matched case-referent studies can be found in the Appendix III. 

For all $\bm \delta_0$ that correspond to a given attributable effect $a$, we can find a $\bm \delta_0^*$ that is the asymptotic worst case and provides an asymptotically correct maximum $P$-value over all possible $\bm \delta_0$ that correspond to the given attributable effect $a$ \citep{gastwirth2000asymptotic}. Our setting is a special case ($n_i = 2, R_{i^+} = 1$ for every $i$) of the general procedure demonstrated in \citet{rosenbaum2002attributing} Section 3.4. Detail discussion on the asymptotic separability can be found in the Appendix II.


This observation may lead one to describe the test as evaluating whether the attributable effect, $\sum_{i,j} Z_{ij} \delta_{ij}$, equals its hypothesized value, $\sum_{i,j} Z_{ij} \delta_{0_{ij}}$. However, it is crucial to emphasize again that the attributable effect, $\sum_{i,j} Z_{ij} \delta_{ij}$, is a random variable rather than a fixed parameter, precluding direct hypothesis testing about its value. Nevertheless, statistical inference on the attributable effect can still be conducted by constructing a one-sided $(1 - \alpha) \times 100\%$ prediction interval. This interval has the interpretation that, in repeated studies, the attributable effect will fall within the interval $(1-\alpha) \times 100\%$ of the time. 


This prediction interval for the attributable effect can be thought of as providing a summary for the $N$-dimensional confidence set for $\bm \delta_0$ which would be difficult to interpret.  A detailed justification can be found in Appendix I.

\section{Robust Inference for the Attributable Fraction Among the Exposed}


\label{sec: method}

Without loss of generality, let $\mathcal{C}_k, 1 \leq k \leq L$ be the set of indices of matched sets with $k$-th subtype of cases. In the context of our motivating example, $L = 2$, and $\mathcal{C}_1$ represents the set of indices of matched sets with hormone sensitive invasive breast cancer cases while $\mathcal{C}_2$ represents the set of indices of matched sets with hormone insensitive invasive breast cancer cases. The fundamental idea of our proposed method is to conduct the hypothesis testing on $\mathcal{C}_1$ and $\mathcal{C}_2$ separately (denoted as $H_1, H_2$), and then combine the $P$-values to test the plausibility of both nulls holding (denoted as $H_\wedge := H_1 \wedge H_2 $ ).

\subsection{Bonferroni, Fisher's Method, and Stouffer's Z-Score Method}

Let $P_1, \cdots, P_L$ be valid, statistically independent $P$-values testing hypothesis $H_1,  \cdots, H_L$, respectively. Hence, $\mathbb{P}(P_k \leq \alpha) \leq \alpha$ for all $\alpha \in [0,1]$ if $H_k$ is true for $k = 1, 2, \cdots, L$. In the context of this paper, $N$ units have been partitioned into $L = 2$ non-overlapping groups based on the status of breast cancer cases (i.e. hormone sensitive breast cancer and hormone insensitive breast cancer).  The conjunction $H_\wedge = H_1 \wedge \cdots \wedge H_L$ asserts that all the hypotheses are true. The Bonferroni method can be expressed using the formula $L \min \{P_1, \cdots, P_L\}$. However, the Bonferroni method tends to be conservative because it does not make any assumptions about the structure of the p-values.  For independent p-values, \citet{fisher1970statistical} proposed the method of combining $L$ independent $P$-values, into one test statistic by $-2 \log (\prod_{k=1}^L P_k) = -2 \sum_{k=1}^L \log (P_k)$ which is stochastically smaller than the chi-square distribution with degree of freedom $2L$, and it is an exact chi-square distribution with degree of freedom $2L$ if $\mathbb{P}(P_k \leq \alpha) = \alpha$ when $H_k$ is true.  
Although in most cases Fisher's method is less conservative than the Bonferroni method, it may sometimes be quite conservative when $H_\wedge$ is false but many $H_k$'s are true with $\mathbb{P}(P_k \leq \alpha) < \alpha$ since $\prod_{k=1}^L P_k$ is excessively large. With this issue in mind, \citet{zaykin2002truncated} proposed a truncated product of $P$-values, $P_\wedge = \prod_{k=1}^L P_k^{\mathbf{1}\{P_k \leq \tilde{\alpha} \} }$ to test $H_\wedge$ where $\mathbf{1}\{P_k \leq \tilde{\alpha}\}$ equals $1$ when $P_k \leq \tilde{\alpha}$ and $0$ otherwise. Note that when $\tilde{\alpha} = 1$, this is exactly Fisher's method. \citet{zaykin2002truncated} give the distribution of $P_\wedge$ when $\mathbb{P}(P_k \leq \alpha) = \alpha$ which yields the reference distribution for $\mathbb{P}(P_k \leq \alpha) \leq \alpha$. If the $P_k$s are independent, under the null hypothesis,  
\begin{equation} \label{equ: trunc}
\begin{split}
     \mathbb{P} (P_\wedge <  w) &= \sum_{k=1}^L \mathbb{P}(P_\wedge \leq  w | K = k) \mathbb{P}(K = k) \\
     & = \sum_{k=1}^L \binom{L}{k} (1 - \tilde{\alpha})^{L-k} \left( w \sum_{s = 0}^{k-1} \frac{(k \log(\tilde{\alpha}) - \log(\tilde{\alpha}))^s }{s!} \mathbf{1} \{ w \leq \tilde{\alpha}^k\} + \tilde{\alpha}^k \mathbf{1}\{w > \tilde{\alpha}^k \} \right).
\end{split}
\end{equation}

Moreover, if the $P_k$s are independent and stochastically larger than the uniform distribution on the $L$-dimensional unit cube $[0,1]^L$, then $\mathbb{P}(P_\wedge < w)$ would be less than or equal to the right side of (\ref{equ: trunc}). Additional methods for combining dependent $P$-values can be found in \citet{rosenbaum2011new} and \citet{brannath2002recursive}. Another analogous approach to Fisher's method is Stouffer's Z-score method. If we let $Z_k = \Phi^{-1} (1 - P_k) $, then

\begin{align*} 
    Z \sim \frac{\sum_{k=1}^L Z_k}{ \sqrt{L} }
\end{align*}
is a Z-score \citep{zaykin2011optimally}, and it can be used to obtain the one-sided combined $P$-value. There are several advantages of using the Z-score method.  First, it is relatively straightforward to introduce weights based on different sizes of $C_k$s:

\begin{align*}
     Z \sim \frac{\sum_{k=1}^L w_k Z_k}{ \sqrt{\sum_{k=1}^L w_k^2} }
\end{align*}
where $w_k = \sqrt{|\mathcal{C}_k|} $ \citep{liptak1958combination}. Second, the power of sensitivity analysis can be easily assessed. 


\subsection{Sensitivity Model for Selection Bias}

The implementation of the above methods combining $P$-values can be more powerful in the sense of the robustness for hidden bias. However, in general, restricting to a particular subtype of cases may introduce additional selection bias under the $H_0: r_{T_{ij}} = r_{C_{ij}}$ because even if the exposure does not affect whether a person has the disease, it may affect the type of disease. To develop a valid testing procedure for $H_0: r_{T_{ij}} = r_{C_{ij}}$ for $\{ij: i \in \mathcal{C}_k \}$ in matched case-referent studies, we adapt the following sensitivity model by \citet{ye2021combining} for selection bias (although we acknowledge that there are alternative approaches to modeling selection bias such as described in  \citet{little2019statistical}). Let $\kappa_k (\cdot) = 1$ denote the $k$-th subtype of cases and $\kappa_k (\cdot) = 0$ denote the non-$k$-th subtype of cases. Individuals have two potential outcomes for the $k$-th subtype of cases: $(\kappa_k (r_{T_{ij}}),\kappa_k (r_{C_{ij}}))$, and $\kappa_k (R_{ij}) = Z_{ij}\kappa_k (r_{T_{ij}}) + (1 - Z_{ij})\kappa_k (r_{C_{ij}}) $ is the observed outcome for $k$-th subtype of cases. Then the sensitivity model is:

\begin{equation} \label{model: selection bias}
    \begin{split}
        \pi_{ij} &:= \mathbb{P}(Z_{ij} = 1|\mathbf{x}_{ij}, u_{ij}) = \frac{\exp \left( \alpha (\mathbf{x}_{ij}) + \log (\Gamma) u_{ij} \right) }{\sum_{j=1}^J \exp \left( \alpha (\mathbf{x}_{ij}) + \log (\Gamma) u_{ij} \right) } \\
        \theta_{T_{ij}^{(k)} } &:= \mathbb{P} \left(  \kappa_k (r_{T_{ij}}) = 1 | \mathbf{x}_{ij}, u_{ij}, r_{T_{ij}} = r_{C_{ij}} = 1 \right) \\
        \theta_{C_{ij}^{(k)} } &:= \mathbb{P} \left( \kappa_k (r_{C_{ij}}) = 1 | \mathbf{x}_{ij}, u_{ij},  r_{T_{ij}} = r_{C_{ij}} = 1 \right) \\
    \end{split}
\end{equation}

where $0 \leq u_{ij} \leq 1, 1 \leq \theta_{T_{ij}^{(k)} }/\theta_{C_{ij}^{(k)} } \leq \Theta, \theta_{C_{ij}^{(k)}} > 0, \Theta, \Gamma \geq 1, \text{ for every } k = 2, \cdots, L,$ and $\sum_{k=1}^L \theta_{T_{ij}^{(k)} } = 1$.
In model (\ref{model: selection bias}), we have the sensitivity parameter $\Gamma$ from Rosenbaum's sensitivity model (\ref{model: hidden bias}). We also have the additional sensitivity parameter $\Theta$ to bound $\theta_{T_{ij}^{(k)} }/\theta_{C_{ij}^{(k)} }$. The natural interpretation of $\Theta$ is that it measures the maximum relative risk of every subtype of cases under the exposure compared to control. Under model (\ref{model: selection bias}), the following propositions for testing $H_0: r_{T_{ij}} = r_{C_{ij}}$ for $\{ij: i \in \mathcal{C}_k\}$ can be stated. Proofs are in the Appendix.

\begin{figure}
  \caption{\label{fig:diag} A directed acyclic graph (DAG) under $H_0: r_{T_{ij}} = r_{C_{ij}}$ for $\{ij: i \in \mathcal{C}_1\}$ where $\mathcal{C}_1$ is the set of indices of matched sets that contain hormone-sensitive invasive breast cancer cases in our motivating example. $X$ is the vector of observed covariates (e.g. demographics) while $U$ is the unmeasured confounding (e.g. exposure to radiation ever). Restricting to hormone sensitive invasive breast cancer may create selection bias under $H_0$ because drinking alcohol 18+ oz per week may change hormone sensitive invasive breast cancer to hormone non-sensitive invasive breast cancer.}
  \centering
  \includegraphics[width=10cm]{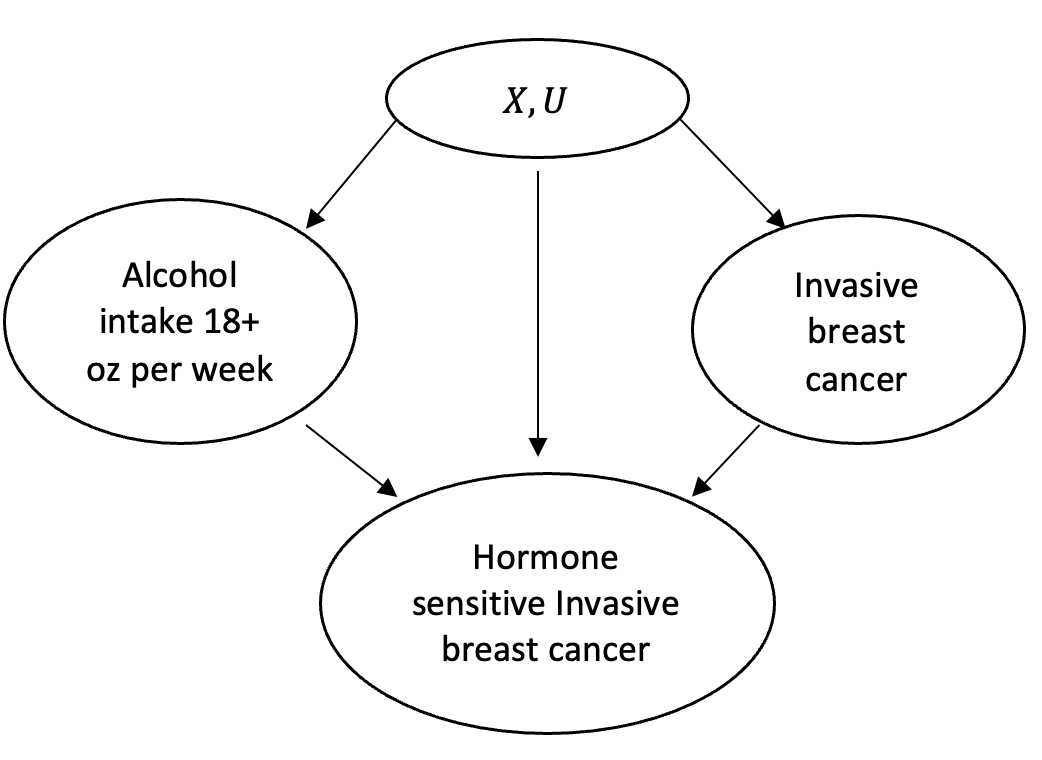}
\end{figure}

\begin{table}
\caption{\label{table:4} The observed $2 \times 2$ contingency table with a sign-score statistic for $i$-th matched set.}
\centering
\begin{tabular}{|| c c c c||}
\hline
  & $Z_{ij} = 1$ & $Z_{ij} = 0$ & Total\\
 \hline\hline
 $R_{ij} = 1$ & $C_i$ & $1 - C_i$ & $1$\\ 
 $R_{ij} = 0$ & $Z_{i^+} - C_i$ & $J-Z_{i^+}-1+C_i$ & $J-1$\\  
 Total & $Z_{i^+}$ & $J-Z_{i^+}$  & $J$   \\
 \hline
\end{tabular}
\end{table}

\begin{proposition} [\cite{ye2021combining}] \label{prop: ith matched set}
Under the null hypothesis that there is no treatment effect and for each fixed $\Gamma$ and $\Theta$
\begin{equation} \label{equ: ith matched set}
    \bar p_{i}=\frac{Z_{i^+}}{Z_{i^+} + (J-Z_{i^+})  \Gamma } \leq  \mathbb{P} (C_{i}=1 | \mathcal{F}, i \in \mathcal{C}_k, \mathcal{Z}) \leq \frac{ \Gamma \Theta Z_{i^+} }{\Gamma \Theta Z_{i^+} + (J - R_{i^+}) } = \bar{\bar p}_{i}.
\end{equation}
The bounds are sharp in the sense that the upper bound is attained at $\mathbf{u}_{i} = (1,0,\cdots,0)$ and $\theta_{T_{ij}^{(k)} }/\theta_{C_{ij}^{(k)} } = \Theta$, and the lower bound is attained at $\mathbf{u}_{i} = (0,1,\cdots,1)$ and $\theta_{T_{ij}^{(k)} }/\theta_{C_{ij}^{(k)} } = 1$. 
\end{proposition}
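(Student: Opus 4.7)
The plan is to reduce the conditional probability to a simple two-parameter function, with one parameter controlled by the treatment-assignment sensitivity parameter $\Gamma$ and the other by the selection sensitivity parameter $\Theta$, and then exploit joint monotonicity. Under $H_0:r_{T_{ij}}=r_{C_{ij}}$, each $R_{ij}=r_{ij}$ is fixed (not a function of $\mathbf{Z}$), and the matched case-referent constraint $\sum_j R_{ij}=1$ singles out a unique index $j^*=j^*(i)$ with $r_{ij^*}=1$, so $C_i=Z_{ij^*}$ identically. The event $\{i\in\mathcal{C}_k\}$ asserts that the case's observed subtype is $k$, i.e., $\kappa_k(R_{ij^*})=1$; by the consistency rule $\kappa_k(R_{ij^*})=Z_{ij^*}\kappa_k(r_{T_{ij^*}})+(1-Z_{ij^*})\kappa_k(r_{C_{ij^*}})$, this event splits disjointly as $\{Z_{ij^*}=1,\kappa_k(r_{T_{ij^*}})=1\}\cup\{Z_{ij^*}=0,\kappa_k(r_{C_{ij^*}})=1\}$.

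Applying Bayes' rule, together with the conditional independence of $Z_{ij^*}$ and the subtype potential outcomes given $(\mathbf{x}_{ij^*},u_{ij^*})$ that is built into the selection-bias sensitivity model, I would write
\[
\mathbb{P}(C_i=1\mid\mathcal{F},\,i\in\mathcal{C}_k,\,\mathcal{Z})
=\frac{\pi^*\,\theta_{T_{ij^*}^{(k)}}}{\pi^*\,\theta_{T_{ij^*}^{(k)}}+(1-\pi^*)\,\theta_{C_{ij^*}^{(k)}}}
=\frac{\pi^*\tau^*}{\pi^*\tau^* + 1-\pi^*},
\]
where $\pi^*=\mathbb{P}(Z_{ij^*}=1\mid\mathcal{F},\mathcal{Z})$ and $\tau^*=\theta_{T_{ij^*}^{(k)}}/\theta_{C_{ij^*}^{(k)}}$. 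A one-line derivative check shows this expression is strictly increasing in each of $\pi^*$ and $\tau^*$, so the bounds follow by separately maximizing or minimizing each factor.

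For the selection-model piece the prescribed range $1\le\tau^*\le\Theta$ is immediate, with both endpoints attained. For $\pi^*$, I would redo the argument of Section 2.2 but with target $Z_{ij^*}$ in place of $C_i$: conditioning on $\mathcal{Z}$ cancels the $\alpha(\mathbf{x}_{ij})$ terms and gives $\mathbb{P}(Z_{ij^*}=1\mid\mathcal{F},\mathcal{Z})=\sum_{\bm z:\,z_{j^*}=1}\Gamma^{\bm u_i^\top\bm z}\big/\sum_{\bm z}\Gamma^{\bm u_i^\top\bm z}$, whose extremes over $0\le u_{ij}\le 1$ are attained at corner configurations. The choice $u_{ij^*}=1,\,u_{ij}=0\,(j\ne j^*)$ maximizes $\pi^*$ to $\Gamma Z_{i^+}/[\Gamma Z_{i^+}+(J-Z_{i^+})]$, and $u_{ij^*}=0,\,u_{ij}=1\,(j\ne j^*)$ minimizes it to $Z_{i^+}/[Z_{i^+}+\Gamma(J-Z_{i^+})]$. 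Combining with $\tau^*=1$ for the minimum (the formula collapses to $\pi^*$) and $\tau^*=\Theta$ for the maximum, and simplifying, delivers $\bar p_i$ and $\bar{\bar p}_i$, attained at the $\bm u_i$ and $\theta_{T_{ij^*}^{(k)}}/\theta_{C_{ij^*}^{(k)}}$ configurations listed in the proposition (after relabeling $j^*$ to position $1$).

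The hard part is step three: establishing the Rosenbaum-type \emph{sharp} bound for the marginal probability $\pi^*=\mathbb{P}(Z_{ij^*}=1\mid\mathcal{F},\mathcal{Z})$ and verifying that the optimizing $\bm u_i$ are precisely the claimed corner vectors. This requires checking that within each matched set the ratio of partition sums in $\pi^*$ is monotone in each $u_{ij}$, so the optimum is attained at a vertex of $[0,1]^J$, and then identifying which vertex. Once that lemma is in place, the monotonicity in step two and the explicit algebraic simplification finish the proof, and sharpness is inherited from the attainability of both the $\pi^*$ and $\tau^*$ extremes.
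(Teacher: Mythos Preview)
The paper does not supply its own proof of this proposition: it is attributed to \citet{ye2021combining}, and Appendix~II contains only the proof of Proposition~\ref{prop:design sensitivity}. So there is no in-paper argument to compare against.

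Your proposal is correct and is the natural route. The key reductions---that under $H_0$ the single case index $j^*$ is fixed so $C_i=Z_{ij^*}$, and that conditioning on $\{i\in\mathcal{C}_k\}$ amounts to conditioning on $\kappa_k(R_{ij^*})=1$---lead directly to the Bayes-rule identity
\[
\mathbb{P}(C_i=1\mid\mathcal{F},\,i\in\mathcal{C}_k,\,\mathcal{Z})=\frac{\pi^*\tau^*}{\pi^*\tau^*+1-\pi^*},
\]
which is jointly increasing in $(\pi^*,\tau^*)$. The $\tau^*$ bounds are immediate from model~(\ref{model: selection bias}), and the $\pi^*$ bounds are the standard Rosenbaum extremes for $\mathbb{P}(Z_{ij^*}=1\mid\mathcal{F},\mathcal{Z})$, which you correctly identify as attained at the corner configurations of $\bm u_i$ and reduce via the binomial-coefficient ratio $\binom{J-1}{Z_{i^+}}/\binom{J-1}{Z_{i^+}-1}=(J-Z_{i^+})/Z_{i^+}$. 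Substituting $\pi^*_{\max}=\Gamma Z_{i^+}/[\Gamma Z_{i^+}+(J-Z_{i^+})]$ and $\tau^*=\Theta$ gives the upper bound; $\tau^*=1$ collapses the expression to $\pi^*_{\min}$ for the lower bound. Your derivation in fact yields $(J-Z_{i^+})$ in the denominator of $\bar{\bar p}_i$, which is the correct expression; the printed $(J-R_{i^+})$ in the statement appears to be a typographical slip, consistent with the analogous form of $\bar{\bar\pi}_i$ in Proposition~\ref{prop:attri}.
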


 Proposition \ref{prop: ith matched set} states the bounds on the $P$-value for the test statistic of the upper left corner cell in Table \ref{table:4}. Now consider the test statistic $\sum_{i \in \mathcal{C}_k} C_i $  which is the sum of $| \mathcal{C}_k |$ independent Bernoulli random variables taking the value $1$ with probability bounded by $ \bar p_{i}$ and $\bar{\bar p}_{i}$. The next proposition gives the bound of distribution of $\sum_{i \in \mathcal{C}_k} C_i $, which is an immediate result from Proposition \ref{prop: ith matched set} and some reviewed materials from Section \ref{subsec: sens}.

\begin{proposition} \label{prop:score}
Under the null hypothesis that there is no treatment effect, then for each $\Gamma, \Theta \geq 1$ and for all $k$, 
\begin{equation}
            \mathbb{P}(\sum_{i \in \mathcal{C}_k} \bar{C}_i \geq q) \leq \mathbb{P} (\sum_{i \in \mathcal{C}_k} C_i \geq q | \mathcal{F}, i \in \mathcal{C}_k, \mathcal{Z}) \leq \mathbb{P}(\sum_{i\in \mathcal{C}_k} \bar{\bar C}_i \geq q) 
\end{equation}
where $\bar{C}_i, \bar{\bar C}_i$ are independent Bernoulli random variables with 
probability $\bar{p}_i$ and $\bar{\bar p}_i$ equal to $1$, respectively.  
\end{proposition}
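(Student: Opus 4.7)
The plan is to derive Proposition \ref{prop:score} as essentially a corollary of Proposition \ref{prop: ith matched set} by combining across matched sets using the independence structure and a standard stochastic-dominance argument for sums of independent Bernoullis. First I would observe that in matched case-referent studies we have $\sum_{j=1}^J R_{ij} = 1$ for every matched set $i$, so $C_i = \sum_{j=1}^J R_{ij} Z_{ij} \in \{0,1\}$; in other words $C_i$ is Bernoulli, with $C_i = 1$ precisely when the unique case in matched set $i$ is exposed. Next I would invoke the fact that treatment assignments in distinct matched sets are made independently conditional on $\mathcal{F}$ and $\mathcal{Z}$ under the Rosenbaum sensitivity model (\ref{model: hidden bias}); consequently the collection $\{C_i : i \in \mathcal{C}_k\}$ is a collection of \emph{independent} Bernoulli random variables. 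Proposition \ref{prop: ith matched set} then pins down, for each $i \in \mathcal{C}_k$, the sharp interval $[\bar p_i, \bar{\bar p}_i]$ containing the success probability $\mathbb{P}(C_i=1 \mid \mathcal{F}, i \in \mathcal{C}_k, \mathcal{Z})$.

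The remaining step is to pass from per-coordinate probability bounds to a stochastic-ordering statement for the sum. I would use the usual coupling: draw i.i.d.\ uniforms $U_i \sim \mathrm{Unif}[0,1]$, and for each $i$ define $\bar C_i = \mathbf{1}\{U_i \leq \bar p_i\}$, $C_i^\star = \mathbf{1}\{U_i \leq p_i^\star\}$, $\bar{\bar C}_i = \mathbf{1}\{U_i \leq \bar{\bar p}_i\}$, where $p_i^\star := \mathbb{P}(C_i = 1 \mid \mathcal{F}, i \in \mathcal{C}_k, \mathcal{Z})$. By Proposition \ref{prop: ith matched set} we have $\bar p_i \leq p_i^\star \leq \bar{\bar p}_i$, whence $\bar C_i \leq C_i^\star \leq \bar{\bar C}_i$ pointwise. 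Summing over $i \in \mathcal{C}_k$ preserves these inequalities almost surely, and since $C_i^\star$ (marginally) has the same distribution as $C_i$ given the conditioning event, I obtain
\begin{equation*}
  \sum_{i \in \mathcal{C}_k} \bar C_i \;\leq\; \sum_{i \in \mathcal{C}_k} C_i^\star \;\leq\; \sum_{i \in \mathcal{C}_k} \bar{\bar C}_i \quad \text{a.s.},
\end{equation*}
which immediately gives the claimed ordering $\mathbb{P}(\sum_{i\in\mathcal{C}_k} \bar C_i \geq q) \leq \mathbb{P}(\sum_{i\in\mathcal{C}_k} C_i \geq q \mid \mathcal{F}, i \in \mathcal{C}_k, \mathcal{Z}) \leq \mathbb{P}(\sum_{i\in\mathcal{C}_k} \bar{\bar C}_i \geq q)$.

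The only subtlety, and what I would be most careful about, is justifying that the $C_i$ are conditionally independent under the joint model (\ref{model: selection bias}) \emph{after} restricting to $\mathcal{C}_k$: conditioning on the event $\{i \in \mathcal{C}_k\}$ for every $i$ (i.e., on the subtype labels) must not destroy across-set independence. This follows because the subtype indicators $\kappa_k(r_{T_{ij}}), \kappa_k(r_{C_{ij}})$ are attributes of the individual units, and the treatment-assignment randomness is independent across matched sets by the model. Once that is stated, the coupling argument above completes the proof, and no calculation beyond what Proposition \ref{prop: ith matched set} already supplies is required.
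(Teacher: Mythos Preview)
Your proposal is correct and follows essentially the same route as the paper, which simply states that Proposition~\ref{prop:score} is ``an immediate result from Proposition~\ref{prop: ith matched set} and some reviewed materials from Section~\ref{subsec: sens}'' without spelling out details. Your explicit coupling argument and your remark on why conditioning on the subtype labels preserves across-set independence are a welcome elaboration of what the paper leaves implicit, but the underlying strategy---per-set Bernoulli bounds from Proposition~\ref{prop: ith matched set} combined via independence into a stochastic ordering for the sum---is exactly the paper's.
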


 With the above propositions, we extend the results to attributable effects. Let $ \bar{\bm \delta}_i = (\bar{\delta}_{i1}, \bar{\delta}_{i2}, \cdots, \bar{\delta}_{iJ} )$, $\bar{\bm \delta}_{0_i} = (\bar{\delta}_{0_{i1}}, \bar{\delta}_{0_{i2}}, \cdots, \bar{\delta}_{0_{iJ}} )$ and  $B_i = \sum_{j=1}^J Z_{ij} r_{C_{ij}}$, then we have the following proposition for attributable effects. 

\begin{proposition} \label{prop:attri}
Write $r_{C_{i^+}} = \sum_{j=1}^J r_{C_{ij}}$. Under the null hypothesis $H_0: \bar{\bm \delta}_i = \bar{\bm \delta}_{0_i}$ and for each fixed $\Gamma$, $\Theta \geq 1$
\begin{equation}
    \bar \pi_{i}=\frac{Z_{i^+} r_{C_{i^+}} }{Z_{i^+}  r_{C_{i^+}} + (J-Z_{i^+}  r_{C_{i^+}})  \Gamma } \leq  \mathbb{P} (B_{i}=1 | \mathcal{F}, i \in \mathcal{C}_k, \mathcal{Z}) \leq \frac{\Theta \Gamma  r_{C_{i^+}}}{\Gamma \Theta Z_{i^+}   r_{C_{i^+}} + (J-Z_{i^+} r_{C_{i^+}})} = \bar{\bar \pi}_{i}.
\end{equation}
The bounds are sharp in the sense that the upper bound is attained at $\mathbf{u}_{i} = (1,0,\cdots,0)$ and $\theta_{T_{ij}^{(k)} }/\theta_{C_{ij}^{(k)} } = \Theta$, and the lower bound is attained at $\mathbf{u}_{i} = (0,1,\cdots,1)$ and $\theta_{T_{ij}^{(k)} }/\theta_{C_{ij}^{(k)} } = 1$. 
\end{proposition}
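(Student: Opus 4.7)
My plan is to extend Proposition~\ref{prop: ith matched set} from the observed-case indicator $R_{ij}$ to the ``would-be case under control'' indicator $r_{C_{ij}}$, mirroring how the $\textbf{AF}_e$ bounds of Section~\ref{sec: review} extend \citet{rosenbaum1991sensitivity}'s hidden-bias bounds to attributable effects. Under $H_0: \bar{\bm\delta}_i = \bar{\bm\delta}_{0_i}$, every $r_{C_{ij}} = R_{ij} - Z_{ij}\bar{\delta}_{0_{ij}}$ is determined, so the at-risk set $S_i := \{j: r_{C_{ij}} = 1\}$ of size $r_{C_{i^+}}$ is fixed, and $\{B_i = 1\}$ is just the event that exactly one of the $Z_{i^+}$ treated units lies in $S_i$. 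Using Bayes' rule,
\begin{align*}
\mathbb{P}(B_i = 1 \mid \mathcal{F}, i \in \mathcal{C}_k, \mathcal{Z})
= \frac{\mathbb{P}(B_i = 1,\, i \in \mathcal{C}_k \mid \mathcal{F}, \mathcal{Z})}{\mathbb{P}(i \in \mathcal{C}_k \mid \mathcal{F}, \mathcal{Z})},
\end{align*}
and each probability expands as a sum over configurations $\bm Z_i$ with $\sum_j Z_{ij} = Z_{i^+}$, whose summands factor as a treatment weight $\exp\!\bigl(\log\Gamma\cdot\sum_j Z_{ij}u_{ij}\bigr)$ from model~(\ref{model: selection bias}) times a subtype factor equal to $\theta_{T_{ij}^{(k)}}$ or $\theta_{C_{ij}^{(k)}}$ depending on whether the observed case was treated. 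The key structural observation is that the amplification $\theta_T/\theta_C \in [1,\Theta]$ is applied precisely to those configurations whose treated unit lies in $S_i$, i.e.\ those with $B_i = 1$, so $\Theta$ enters only through the numerator of the ratio above.

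Next I would jointly optimize over $\bm u_i \in [0,1]^J$ and $\theta_T/\theta_C \in [1,\Theta]$. Adapting the coordinate-wise monotonicity argument of Proposition~4.12 of \citet{rosenbaum2002overt} used to prove Proposition~\ref{prop: ith matched set}, the ratio is monotone in each $u_{ij}$ and in the selection ratio, so its extrema are attained at vertices of the product domain. The upper bound sets $u_{ij} = 1$ on $S_i$, $u_{ij} = 0$ off $S_i$, and $\theta_T/\theta_C = \Theta$, yielding $\bar{\bar\pi}_i$ after canceling common treatment weights between numerator and denominator; the lower bound uses the mirror choice $u_{ij} = 0$ on $S_i$, $u_{ij} = 1$ off $S_i$, and $\theta_T/\theta_C = 1$ (no selection bias), yielding $\bar\pi_i$. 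When $r_{C_{i^+}} = 1$ these recover the vectors $(1,0,\dots,0)$ and $(0,1,\dots,1)$ displayed in the proposition after sorting the at-risk unit first; for $r_{C_{i^+}} > 1$ the same boundary value is assigned to every coordinate in $S_i$.

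The main obstacle is the monotonicity verification: because each $u_{ij}$ appears in both the numerator and the denominator through every configuration $\bm Z_i$, a direct partial-derivative argument is cumbersome. I would instead group configurations according to whether the treated unit lies in $S_i$ and invoke a Hardy--Littlewood--P\'olya style rearrangement, as in the hidden-bias lemma of \citet{rosenbaum1991sensitivity}, to show that perturbing any $u_{ij}$ toward its stated extreme changes the ratio in a definite direction; the corresponding monotonicity in $\theta_T/\theta_C$ is immediate since this factor only multiplies numerator terms.
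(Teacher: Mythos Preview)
Your proposal is correct and follows essentially the same route as the paper, which treats the result as an immediate consequence of Proposition~\ref{prop: ith matched set} together with equation~(2) of \citet{rosenbaum2002attributing}: under $H_0:\bar{\bm\delta}_i=\bar{\bm\delta}_{0_i}$ the values $r_{C_{ij}}$ are determined, so $B_i$ has exactly the same structure as $C_i$ with $r_{C_{ij}}$ in place of $R_{ij}$, and the hidden- and selection-bias bounds from Proposition~\ref{prop: ith matched set} transfer directly. Your plan simply unpacks that citation by re-deriving the monotonicity and vertex arguments rather than invoking them.
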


Proposition \ref{prop:attri} is an immediate consequence of Proposition \ref{prop: ith matched set} and equation (2) from \citet{rosenbaum2002attributing}. Analogously, we have the following proposition for the attributable effects within $\mathcal{C}_k$, which is the number of cases from the $k$-th subtype that is actually attributed to the exposure or treatment. 

\begin{proposition} 
Under the joint null hypothesis $H_0: \bar{\bm \delta}_i = \bar{\bm \delta}_{0_i}$ for $i \in \mathcal{C}_k$ and for each fixed $\Gamma$, $\Theta \geq 1$
\begin{equation} \label{equ: attri}
    \mathbb{P}(\sum_{i \in \mathcal{C}_k} \bar{D}_i \geq k) \leq  \mathbb{P} (\sum_{i \in \mathcal{C}_k} B_{i} \geq k | \mathcal{F}, i \in \mathcal{C}_k, \mathcal{Z}) \leq \mathbb{P}(\sum_{i \in \mathcal{C}_k} \bar{\bar D}_i \geq k). 
\end{equation}
where $\bar{D}_i, \bar{\bar D}_i$ are independent Bernoulli random variables with 
probability $\bar{\pi}_i$ and $\bar{\bar \pi}_i$ equal to $1$, respectively.  
\end{proposition}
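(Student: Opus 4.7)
The plan is to derive the stated sandwich bound by combining the per-stratum bound from Proposition \ref{prop:attri} with a stochastic-dominance argument for sums of independent Bernoulli random variables. The overall structure mirrors exactly the way Proposition \ref{prop:score} falls out of Proposition \ref{prop: ith matched set}, so Proposition \ref{prop:attri} does nearly all of the probabilistic work; what remains is an aggregation step across matched sets.

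First, I would observe that under the null $H_0:\bar{\bm\delta}_i = \bar{\bm\delta}_{0_i}$, the value $r_{C_{ij}}$ is recoverable for every $(i,j)$ from $R_{ij}$ and $\bar{\delta}_{0_{ij}}$, so the $B_i = \sum_j Z_{ij} r_{C_{ij}}$ are well-defined Bernoulli random variables (in the case-referent setting there is exactly one unit per matched set with $r_{C_{ij}}=1$ under the null, so $B_i \in \{0,1\}$). Conditional on $\mathcal{F}$ and $\mathcal{Z}$, the treatment assignments in distinct matched sets are drawn independently, and the event $\{i \in \mathcal{C}_k\}$ is determined from $(Z_{ij}, \kappa_k(r_{T_{ij}}), \kappa_k(r_{C_{ij}}))$ within matched set $i$ only. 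Hence the collection $\{B_i \mathbf{1}\{i \in \mathcal{C}_k\}\}_{i}$ is a set of mutually independent indicators, and conditioning on $i \in \mathcal{C}_k$ for all $i \in \mathcal{C}_k$ preserves this independence across sets.

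Next, Proposition \ref{prop:attri} supplies, for each $i \in \mathcal{C}_k$, the sharp bound
\begin{equation*}
\bar{\pi}_i \;\leq\; \mathbb{P}(B_i = 1 \mid \mathcal{F}, i \in \mathcal{C}_k, \mathcal{Z}) \;\leq\; \bar{\bar{\pi}}_i.
\end{equation*}
With the independence established above, the conclusion follows from the standard fact that for independent Bernoulli vectors $(X_i)$ and $(Y_i)$ with success probabilities $p_i \leq q_i$, the sum $\sum X_i$ is stochastically dominated by $\sum Y_i$. I would make this concrete via a coupling: for each $i$, draw a single $U_i \sim \text{Uniform}[0,1]$ and set the indicator equal to $\mathbf{1}\{U_i \leq \cdot\}$ at each of the three probability levels $\bar{\pi}_i$, $\mathbb{P}(B_i=1\mid\cdot)$, and $\bar{\bar{\pi}}_i$; this yields an almost-sure pointwise ordering $\bar{D}_i \leq B_i \leq \bar{\bar{D}}_i$ on a common probability space, and summing gives the corresponding ordering of $\sum \bar{D}_i$, $\sum B_i$, and $\sum \bar{\bar{D}}_i$, from which the tail inequality at threshold $k$ is immediate.

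I expect no real obstacle — the work is bookkeeping rather than new probability — but if anything subtle arises it will be in Step 1, namely confirming that conditioning on the selected event $\{i \in \mathcal{C}_k\}$ for several sets simultaneously does not induce dependence across matched sets. This is why it matters that the selection event for matched set $i$ is a function solely of the within-set treatment assignments and subtype potential outcomes, together with the assumption (standard in the matched-design setup) that treatment assignments are independent across matched sets given $\mathcal{F}$ and $\mathcal{Z}$. Once that independence is pinned down, the stochastic-dominance step is routine and the proposition follows directly from Proposition \ref{prop:attri}.
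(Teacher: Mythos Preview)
Your proposal is correct and matches the paper's approach: the paper introduces this proposition with ``Analogously,'' indicating it follows from Proposition \ref{prop:attri} in exactly the same way Proposition \ref{prop:score} follows from Proposition \ref{prop: ith matched set}, namely by combining the per-set Bernoulli bounds with independence across matched sets and a stochastic-dominance argument for sums. Your explicit coupling and your remark about why conditioning on $\{i \in \mathcal{C}_k\}$ preserves independence across sets simply spell out what the paper leaves implicit.
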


\subsection{Design Sensitivity of the Methods for Combination of $P$-Values}
The power of one-sided $\alpha$ sensitivity analysis is the probability that the upper bound of equation (\ref{equ: ith matched set}) - (\ref{equ: attri}) is less than or equal to $\alpha$, i.e., the probability that we find evidence for a casual effect that is insensitive to bias $(\Gamma ,\Theta)$. We consider power under the favorable situation in which there is a true causal effect and there is in fact no bias. In general, the favorable situation refers to the situation in which $H_0: \bm \delta = \bm 0$ is false and $\mathbb{P}(\mathbf{Z} = \mathbf{z}| \mathcal{F}, ij \in \mathcal{C}_1, \mathcal{Z}) = 1/J^I$. In the favorable situation, we unambiguously would like to reject the null hypothesis of no treatment effect \citep{hansen2014clustered}. For a fixed $\Theta^*$, the design sensitivity is the sensitivity parameter $\tilde{\Gamma}$ such that the power of sensitivity analysis tends to $1$ for any $\Gamma < \tilde{\Gamma}$ and $0$ for any $\Gamma > \tilde{\Gamma}$ when $I \rightarrow \infty$. So in the limit, we are able to distinguish a treatment effect without bias when $\Gamma < \tilde{\Gamma}$ but not when $\Gamma > \tilde{\Gamma}$. The following proposition states the design sensitivity parameter for methods combining $P$-values. 

\begin{proposition} \label{prop:design sensitivity}
Suppose there are $L$ non-overlapping subgroups of matched sets defined by different types of positive outcomes (i.e. different subtypes of cases), and allow the number of matched sets $I \rightarrow \infty$ with $L$ fixed, in such a way that the fraction of matched sets in each subgroup is tending to a nonzero constant. Suppose for a fixed $\Theta^*$, the test in subgroup $l$ has design sensitivity $\tilde{\Gamma}_l$. Then for $0 < \tilde{\alpha} \leq 1$, the design sensitivity of various methods for combining $P$-values (e.g. Bonferroni, Fisher, etc) is $\tilde{\Gamma}_{\max} = \max\left(\tilde{\Gamma}_1, \cdots, \tilde{\Gamma}_L \right)$. 
\end{proposition}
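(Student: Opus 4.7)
The plan is to prove the two-sided inequality $\tilde{\Gamma}_{\max} \leq \tilde{\Gamma} \leq \tilde{\Gamma}_{\max}$ where $\tilde{\Gamma}$ denotes the design sensitivity of the combined test. For each subgroup $l$, let $P_l^{\Gamma}$ denote the worst-case $P$-value obtained from Proposition \ref{prop:score} (or Proposition \ref{prop:attri}) at bias $(\Gamma,\Theta^*)$. By the definition of design sensitivity together with asymptotic separability (\citet{gastwirth2000asymptotic}), under the favorable situation $P_l^{\Gamma} \to 0$ in probability whenever $\Gamma < \tilde{\Gamma}_l$ and $P_l^{\Gamma} \to 1$ in probability whenever $\Gamma > \tilde{\Gamma}_l$ as $|\mathcal{C}_l| \to \infty$. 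Because each subgroup's fraction is tending to a nonzero constant, $|\mathcal{C}_l|\to\infty$ jointly for every $l$ as $I\to\infty$, and the subgroup $P$-values are independent since the subgroups are non-overlapping.

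\emph{Lower bound.} Fix $\Gamma < \tilde{\Gamma}_{\max}$ and pick $l^\star$ achieving the maximum, so $P_{l^\star}^{\Gamma}\to 0$ in probability. Each combination rule produces a combined $P$-value $P_{\wedge}^{\Gamma}$ that is monotone in every coordinate and driven to $0$ by any single coordinate going to $0$. Specifically, Bonferroni satisfies $P_{\wedge}^{\Gamma} \leq L\,P_{l^\star}^{\Gamma}$; Fisher's statistic satisfies $-2\sum_l \log P_l^{\Gamma} \geq -2\log P_{l^\star}^{\Gamma} \to \infty$ in probability, so its $\chi^2_{2L}$ tail tends to $0$; Stouffer's Z dominates $\Phi^{-1}(1 - P_{l^\star}^{\Gamma})/\sqrt{L} \to \infty$; and the truncated product (\ref{equ: trunc}) is likewise bounded above by the value obtained by setting the other $P_l^{\Gamma}$ to $1$. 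In each case $P_{\wedge}^{\Gamma} \to 0$ in probability, so the combined sensitivity analysis has power tending to $1$ at level $\alpha$. Hence $\tilde{\Gamma} \geq \tilde{\Gamma}_{\max}$.

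\emph{Upper bound.} Fix $\Gamma > \tilde{\Gamma}_{\max}$, so $\Gamma > \tilde{\Gamma}_l$ for every $l$ and $P_l^{\Gamma} \to 1$ in probability for every $l$. For Bonferroni, a union bound gives $\mathbb{P}(P_{\wedge}^{\Gamma} \leq \alpha) \leq \sum_l \mathbb{P}(P_l^{\Gamma} \leq \alpha/L) \to 0$. For Fisher, $-2\sum_l \log P_l^{\Gamma} \to 0$ in probability, so the $\chi^2_{2L}$ tail tends to $1$. For Stouffer, $Z \to -\infty$ in probability. For the truncated product, the event $\{P_l^{\Gamma} > \tilde{\alpha} \text{ for every } l\}$ has probability tending to $1$, on which $P_{\wedge}^{\Gamma} = 1$ by construction. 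Thus the combined analysis rejects with probability tending to $0$, and $\tilde{\Gamma} \leq \tilde{\Gamma}_{\max}$.

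\emph{Main obstacle.} The nontrivial content is not in the combination step, which reduces to elementary monotonicity/continuous-mapping arguments once the one-coordinate asymptotics are known. The hard input is the characterization $P_l^{\Gamma} \to 1$ in probability (rather than just ``power tends to $0$ at one $\alpha$'') when $\Gamma > \tilde{\Gamma}_l$. This follows because, under the favorable situation, the per-matched-set means $|\mathcal{C}_l|^{-1}\sum_{i\in\mathcal{C}_l}\bar{\bar{p}}_i$ and $|\mathcal{C}_l|^{-1}T_l$ converge to deterministic limits $\mu_\Gamma$ and $\mu_T$ satisfying $\mu_\Gamma > \mu_T$ precisely when $\Gamma > \tilde{\Gamma}_l$; a central-limit argument for the asymptotically-separable upper bound of (\ref{equ: ith matched set}) then gives that the normalized deviation $(T_l - \sum_{i\in\mathcal{C}_l}\bar{\bar{p}}_i)/\sqrt{|\mathcal{C}_l|} \to -\infty$, so the normal-approximated $P$-value upper bound tends to $1$ in probability, and the analogous argument covers the attributable-effect statistic in Proposition \ref{prop:attri}.
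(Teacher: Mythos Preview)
Your argument follows essentially the same two-case structure as the paper's proof, only spelled out in much greater detail: the paper simply asserts that when $\Gamma>\tilde{\Gamma}_{\max}$ all subgroup $P$-value bounds tend to $1$ so the combined $P$-value tends to $1$, and when $\Gamma<\tilde{\Gamma}_{\max}$ the combined $P$-value tends to $0$, without treating the combination rules separately. Your identification of the ``hard input'' (that $P_l^{\Gamma}\to 1$ in probability, not merely that power at a fixed $\alpha$ tends to $0$) is exactly the unstated lemma the paper is leaning on, and your CLT sketch for it is the standard route.

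One step does not go through as written. In the lower-bound paragraph you claim that Stouffer's $Z$ dominates $\Phi^{-1}(1-P_{l^\star}^{\Gamma})/\sqrt{L}$. That inequality requires $\sum_{l\neq l^\star} Z_l\geq 0$, which fails precisely in the regime you care about: if $\tilde{\Gamma}_l<\Gamma<\tilde{\Gamma}_{\max}$ for some $l\neq l^\star$, then $P_l^{\Gamma}\to 1$ and $Z_l\to -\infty$ at rate $\sqrt{|\mathcal{C}_l|}$, the same rate at which $Z_{l^\star}\to +\infty$. The sign of Stouffer's sum is therefore determined by a weighted combination $\sum_l \sqrt{f_l}\,c_l$ of the per-subgroup drifts (with $f_l$ the limiting subgroup fraction), not by $\max_l c_l$, and in general Stouffer's design sensitivity is \emph{not} $\tilde{\Gamma}_{\max}$. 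The proposition's ``e.g.\ Bonferroni, Fisher, etc.'' should be read as covering only those combination rules with the property that a single coordinate tending to $0$ forces the combined $P$-value to $0$ regardless of the others; your arguments for Bonferroni, Fisher, and the truncated product are correct, but the Stouffer case should be dropped (or restated as a separate result with a different limit).
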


\subsection{Power of Sensitivity Analysis}

 Assume that the null hypothesis will be rejected at the significance level $\alpha$, and the alternative hypothesis is true and $A_0 = a_*$, for large $I$ and $|\mathcal{C}_k|$ for $k = 1, \cdots, L$. Using the normal approximation, the statistical power can be approximated by
\begin{align*}
    B(a_*) = 1 - \Phi \left(Z_{1 - \alpha} - \frac{ a_*   }{\sqrt{\sum_{i=1}^I \bar{\bar \pi}_i (1 - \bar{\bar \pi}_i)  } }  \right)
\end{align*}
without considering additional case information. Here, $Z_{1-\alpha}$ is the $Z$-score at $1 - \alpha$ level. When case description information is used,  assume that the null hypothesis will be rejected at the significance level $\alpha$, and the alternative hypothesis is true and $A_0^{(k)} = a_*^{(k)}$ in $\mathcal{C}_k$ for $k = 1, 2, \cdots,L$ , analogously, 

\begin{align*}
    B(a_*^{(1)}, a_*^{(2)}, \cdots, a_*^{(L)})  = 1 - \Phi \left( Z_{1 - \alpha} - \sum_{k=1}^L \frac{ a_*^{(k)} }{ \sqrt{\sum_{i \in \mathcal{C}_k} \bar{\bar \pi}_i (1 - \bar{\bar \pi}_i)  }   } \middle/ \sqrt{L}   \right)
\end{align*}
by Stouffer's Z-score method, and 
\begin{align*}
    B(a_*^{(1)}, a_*^{(2)}, \cdots, a_*^{(L)})  = 1 - \Phi \left( Z_{1 - \alpha} - \sum_{k=1}^L  \frac{ w_k a_*^{(k)} }{ \sqrt{\sum_{i \in \mathcal{C}_k} \bar{\bar \pi}_i (1 - \bar{\bar \pi}_i)  }   } \middle/ \sqrt{\sum_{k=1}^L w_k^2}   \right)
\end{align*}
by weighted Stouffer's Z-score method. The analytical power formulas provide guidance about when using case description information can reduce the sensitivity to hidden bias given fixed $\Theta$. For example, assume $L = 2$ (i.e. there are two subtypes of cases), and if the sum of attributable effects obtained from these subtypes is greater than the attributable effect obtained from the overall population without using case description information, then using the case definition information leads to higher power of sensitivity analysis. Detailed power calculation can be found in the Appendix III.

\section{Simulation Studies}
\label{sec: simulation}

\subsection{Goal and Structure}

Our primary goal in the simulation section is to compare the power of sensitivity analysis across different methods of combining $P$-values from our proposed methodology under a wide range of data-generating processes (DGPs). To be more specific, we estimate the probability that a sensitivity analysis performed with the parameters set to be $\Gamma$ and $\Theta$ produces an upper bound on the one-sided $P$-value of at most $0.05$ in the favorable situation (\citet{hansen2014clustered}) with different treatment effect and no unmeasured confounders.

In our simulation setting, since our outcomes are binary, we generate $r_T \sim \text{Bernoulli}(p_1), r_C \sim \text{Bernoulli}(p_2)$ separately with the same size $n$. The observed outcome is $R = Z \cdot r_T + (1 - Z) \cdot r_C$, and the average treatment effect is $\mathbb{E} (r_T - r_C) = p_1 - p_2 = \delta$. Then we randomly assign each unit to treatment or control with equal probability $1/2$. We apply this DGP to generate two datasets (denoted by $\mathcal{D}_1, \mathcal{D}_2$) separately with the same sample size but different average treatment effect. We summarize these two datasets by using two $2 \times 2$ contingency tables.

To understand this setting better, we refer back to our motivating example. In the context of our motivating example, $\mathcal{D}_1$ refers to the group of patients with hormone sensitive invasive breast cancer and non-cases while  $\mathcal{D}_2$ refers to the group of patients with hormone insensitive invasive breast cancer and non-cases. The treatment (exposure) refers to drinking alcohol 18+ oz per week. We want to compute the power of sensitivity analysis across different combinations of methods and scenarios.  

The factors of our simulation can be described as follows: 

\textbf{Factor 1:} methods of combining $P$-values: (i) directly merge two tables and then conduct hypothesis testing; (ii) Stouffer's $Z$-Score method; (iii) Fisher's method; (iv) truncated Fisher's method with $\tilde{\alpha} = 0.05$ and (v) Bonferroni method.

\textbf{Factor 2:} combinations of average treatment effect for two datasets: (i) equal effects: $\delta_1 = 0.2, \delta_2 = 0.2$; (ii) moderately unequal effects: $\delta_1 = 0.4, \delta_2 = 0.2$; (iii) unequal effects: $\delta_1 = 0.6, \delta_2 = 0.2$ and (iv) effects only in one dataset: $\delta_1 = 0.6, \delta_2 = 0.0$. This factor varies the additional information on two groups of units that have different responses to the treatment or exposure. 

\textbf{Factor 3:} sensitivity parameter $\Gamma$ for hidden bias: (i) $\Gamma = 1.0$; (ii) $\Gamma = 1.5$; (iii) $\Gamma = 2.0$; (iv) $\Gamma = 2.5$; (v) $\Gamma = 3.0$; (vi) $\Gamma = 3.5$;  (vii) $\Gamma = 4.0$; (viii) $\Gamma = 4.5$ and (ix) $\Gamma = 5.0$. 

\textbf{Factor 4:} sensitivity parameter $\Theta$ for selection bias: (i) $\Theta = 1.0$; (ii) $\Theta = 1.1$ and (iii) $\Theta = 1.2$. 

Lastly, we fix our sample size $n = 500$ for each dataset, and therefore the whole merged dataset has sample size $1000$. We repeat the simulation $200$ times and report the proportion of times that the null hypothesis $H_0: \bm \delta = \bm 0$ is rejected.

\begin{table}
\hspace*{-2cm}
\caption{\label{table:5} Power simulations include directly merged method, Stouffer's $Z$-score method, Fisher's method, truncated method with $\tilde{\alpha} = 0.05$, and Bonferronni method. Each situation is sampled $200$ times. There are $500$ data points in each dataset with average treatment effects $\delta_1$ and $\delta_2$. $\Theta = 1$.  }
\centering
\scalebox{0.7}{
\begin{tabular}{||c c c c c c c c c c||} 
 \hline
 \multicolumn{10}{||c||}{Equal effects: $\delta_1 = 0.2, \delta_2 = 0.2$}   \\
 \hline \hline
   & $\Gamma = 1.0$  & $\Gamma = 1.5$  & $\Gamma = 2.0$  & $\Gamma = 2.5$  & $\Gamma = 3$ & $\Gamma = 3.5$ & $\Gamma = 4$ & $\Gamma = 4.5$ & $\Gamma = 5$  \\ 
 \hline
 Merged & 1.000 &  1.000 &    1.000 & 0.990 & 0.910 & 0.565 & 0.315 & 0.105 & 0.040  \\
 Stouffer & 1.000 & 1.000 & 1.000 & 0.990 & 0.890 & 0.580 & 0.325 & 0.100 & 0.035 \\ 
 Fisher & 1.000 & 1.000 & 1.000 & 0.985 & 0.885 & 0.570 & 0.305 & 0.090 & 0.035 \\ 
 Truncated $\tilde{\alpha} = 0.05$ & 1.000 & 1.000 & 1.000 & 0.970 & 0.805 & 0.475 & 0.220 & 0.085 & 0.015 \\
 Bonferroni  & 1.000 & 1.000 & 1.000 & 0.970 & 0.785 & 0.460 & 0.215 & 0.080 & 0.010\\
 \hline \hline

 \multicolumn{10}{||c||}{Slightly unequal effects: $\delta_1 = 0.3, \delta_2 = 0.2$} \\
 \hline \hline
   & $\Gamma = 1.0$  & $\Gamma = 1.5$  & $\Gamma = 2.0$  & $\Gamma = 2.5$  & $\Gamma = 3$ & $\Gamma = 3.5$ & $\Gamma = 4$ & $\Gamma = 4.5$ & $\Gamma = 5$  \\ 
 \hline
 Merged & 1.000 &  1.000 &  1.000 & 1.000 & 0.965 & 0.780 & 0.440 & 0.255 & 0.125 \\
 Stouffer & 1.000 &  1.000 &  1.000 & 1.000 & 0.970 & 0.830 & 0.515 & 0.290 & 0.145 \\ 
 Fisher & 1.000 &  1.000 &  1.000 & 1.000 & 0.955 & 0.805 & 0.465 & 0.250 & 0.145\\ 
 Truncated $\tilde{\alpha} = 0.05$ & 1.00 & 1.000 &  1.000 & 1.000 & 0.910 & 0.675 & 0.445 & 0.255 & 0.145 \\ 
 Bonferroni & 1.000 &  1.000 & 1.000 & 0.995 & 0.905 & 0.675 & 0.430 & 0.250 & 0.145\\
 \hline \hline
 \multicolumn{10}{||c||}{Unequal effects: $\delta_1 = 0.6, \delta_2 = 0.2$} \\
 \hline \hline
   & $\Gamma = 1.0$  & $\Gamma = 1.5$  & $\Gamma = 2.0$  & $\Gamma = 2.5$  & $\Gamma = 3$ & $\Gamma = 3.5$ & $\Gamma = 4$ & $\Gamma = 4.5$ & $\Gamma = 5$  \\ 
 \hline
 Merged & 1.000 &  1.000 &  1.000 &  1.000 & 1.000 &  1.000 & 0.985 & 0.930 & 0.780 \\
 Stouffer & 1.000 &  1.000 & 1.000 &  1.000 &  1.000 &  1.000 & 1.000 & 0.970 & 0.890 \\ 
 Fisher & 1.000 &  1.000 &    1.000 &  1.000 &    1.000 &  1.000 & 1.000 & 1.00 & 1.000 \\ 
 Truncated $\tilde{\alpha} = 0.05$ & 1.000 &  1.000 &  1.000 & 1.000 & 1.000 & 1.000 & 1.000 & 1.000 & 1.000 \\ 
 Bonferroni & 1.000 &  1.000 &  1.000 & 1.000 & 1.000 & 1.000 & 1.000 & 1.000 & 1.000 \\
 \hline \hline
 \multicolumn{10}{||c||}{Effect only in one dataset: $\delta_1 = 0.6, \delta_2 = 0.0$} \\
 \hline \hline
   & $\Gamma = 1.0$  & $\Gamma = 1.5$  & $\Gamma = 2.0$  & $\Gamma = 2.5$  & $\Gamma = 3$ & $\Gamma = 3.5$ & $\Gamma = 4$ & $\Gamma = 4.5$ & $\Gamma = 5$  \\ 
 \hline
 Merged & 1.000 &  1.000 &  1.000 &  1.000 &  1.000 & 0.995 & 0.965 & 0.805 & 0.550 \\
   Stouffer & 1.000 &  1.000 &    1.000 &  1.000 &  1.000 & 1.000 & 0.990 & 0.905 & 0.745 \\ 
 Fisher & 1.000 &  1.000 &  1.000 &  1.000 & 1.000 & 1.000 & 1.000 & 1.000 & 0.995\\ 
 Truncated $\tilde{\alpha} = 0.05$ & 1.000 & 1.000 & 1.000 & 1.000 & 1.000 &  1.000 &  1.000 &  1.000 &  1.000 \\  
 Bonferroni & 1.000 & 1.000 & 1.000 & 1.000 & 1.000 & 1.000 & 1.000 & 1.000 & 1.000\\
 \hline 
\end{tabular}}
\end{table}

\subsection{Simulation Result}

Table \ref{table:5} summarizes the simulation results for different choices of methods for combining $P$-values, combinations of average treatment effect for two datasets, and sensitivity parameter $\Gamma$ for hidden bias. The corresponding sensitivity parameter for selection bias is $\Theta = 1.0$. Additional simulation results for $\Theta = 1.1$ and $\Theta = 1.2$ can be found in the Appendix VII.  

We identify several consistent patterns from the simulation results. First, we find that when the treatment effects between two datasets are equal, our proposed methodology does about as well as the standard hypothesis test while our proposed methodology does substantially better than the standard as the difference in the average treatment effects between the two datasets increases. Furthermore, among the methods of combining $P$-values, the unweighted Stouffer's $Z$-score method demonstrates the lowest power which illustrates that the unweighted Stouffer's $Z$-score could be somewhat conservative under certain types of settings \citep{whitlock2005combining}. The Bonferroni method typically delivers a test that is as powerful as and in many cases more powerful than other type of methods of combining $P$-values. Generally Bonferroni is not very conservative when it is applied two independent tests.  Hence, we recommend using our proposed methodology with the Bonferroni method when there are two independent types of cases for which the treatment effect might differ.

\section{Case Study:  Alcohol Consumption and Risk of Postmenopausal Invasive Breast
Cancer by Subtypes}

\label{sec: case-studies}

Returning to the motivating example, of the 4,046 invasive breast cancer patients in the data, there are 695 patients who drank alcohol 18+ oz per week. And of the 74,993 non-invasive breast cancer patients, there are 1,495 patients who drank alcohol 18+ oz per week. We matched each of the 4,046 invasive breast cancer patients (case) to one non-invasive breast cancer patient (non-case) on age, body mass index, ethnicity, family income, blood pressure, education, hormone replacement therapy (HRT) usage, and smoking status. The matching process is based on \citet{zhang2021matching} by implementing \textsf{R} package \textsf{match2C}. We minimize the Mahalanobis distance in the matching algorithm. See Chapter 8 of \citet{rosenbaum2010design} for a detailed discussion. Our matched data include missing information on ethnicity (13 out of 4,046 cases, and 11 out of 4,046 non-cases), family income (180 out of 4,046 cases, and 178 out of 4,046 non-cases), education (28 out of 4,046 cases, and 28 out of 4,046 non-cases), and smoking status (1792 out of 4,046 cases, and 1788 out of 4,046 non-cases). We followed \citet{rosenbaum2010design} in handling the missingness by for categorical variables including a category for missingness and for continuous variables (i) plugging in the mean value when missing and (ii) including an indicator variable for missingness.

 The distribution of demographic and personal characteristics stratified by invasive breast cancer as well as the balance of matching can be found in Table \ref{table:6}. Before matching, some covariates such as ethnicity and body mass index have absolute standardized mean differences greater than 0.2. After matching, the absolute standardized mean differences of every covariate are less than 0.08. We also assessed the matching quality through the Sample-Splitting Classification Permutation Test by a logistic regression classifier, as outlined in \citep{chen2023testing}. We did not reject the null hypothesis that the balance of the measured variables is what we would expect in a randomized experiment ($P$-value = 0.21).  Comparing women who drank 18+ oz per week versus those who do not, the odds ratio for invasive breast cancer is 1.58 with 95\% confidence interval $[1.16, 2.16]$ (see Table \ref{table:7} in the supplementary material).  Under the assumption of no unmeasured confounding, a $95\%$ one-sided prediction interval for the attributable effect $A$ is $A \geq 17$. This prediction interval has the property that if treatment were assigned at random in each matched pair (conditional on the number of treated units in each matched pair), then in at least $95 \%$ of repeated studies, the number of invasive cases that are attributed to drinking 18+ oz per week of alcohol would be in the interval. Consequently, assuming no unmeasured confounding, we can have some ``confidence'' that at least $17/103 = 16.5 \%$ of the invasive breast cancer cases are attributable to drinking 18+ oz per week of alcohol. The $P$-value for testing the Fisher's null hypothesis $H_0: \bm \delta = \bm 0$ is $0.002$ under the assumption of no unmeasured confounding. This evidence that alcohol increases invasive breast cancer is robust to hidden bias of $\Gamma = 1.22$, meaning that two matched subjects differing in their odds of 18+ oz per week of alcohol intake by at most a factor of $1.22$ would not be able to explain away the apparent effect of drinking 18+ oz per week of alcohol on invasive breast cancer.


 Leveraging the case description information may potentially strengthen the causal conclusion to be more robust against unmeasured confounding because the effect of 18+ oz per week of alcohol intake is more noticeable among hormone sensitive breast cancer in comparison to hormone insensitive breast cancer. Out of 4,046 invasive breast cancer cases, 3,177 cases are hormone sensitive while 869 cases are hormone insensitive. The odds ratio of having hormone sensitive and insensitive invasive breast cancer are 2.00 (95\% CI $[1.39, 2.88]$) and 0.71 (95\% CI $[0.37, 1.38]$), respectively. We apply our developed methodology to the matched data. We first set $\Theta = 1.00$, which means drinking alcohol 18+ oz per week would not affect the status of an invasive breast cancer patient (hormone sensitive or insensitive). Under this setting, one would observe that Fisher's method would be insensitive to a bias of $\Gamma = 1.30$ on rejection of $H_0: \bm \delta = \bm 0$,   and the Stouffer's $Z$-score method and truncated Fisher's method with $\tilde{\alpha} = 0.10$ would be insensitive to a bias of $\Gamma = 1.38$ while the Bonferroni method would be insensitive to a bias of $\Gamma = 1.40$. These results indicate the significant improvement in robust to hidden bias by using case description information (without using hidden bias, the results were only insensitive up to $\Gamma = 1.22$).


We now consider $\Theta = 1.10$, which means that among all invasive breast cancer patients, up to $10\%$ of them who have hormone insensitive breast cancer would have hormone sensitive breast cancer if they were drinking alcohol 18+ oz per week. Under this setting, Fisher's method would be insensitive to a bias of $\Gamma = 1.18$ on rejection of $H_0: \bm \delta = \bm 0$, and the Stouffer's $Z$-score method and the truncated Fisher's method with $\tilde{\alpha} = 0.10$ would be insensitive to a bias of $\Gamma = 1.26$ while the Bonferroni method would be insensitive to a bias of $\Gamma = 1.28$. We observed that our proposed methodology is more robust to hidden bias even when selection bias is introduced compared with the method without using case description information.

\begin{table}
\hspace*{-1cm}
\caption{\label{table:6} \small Balance table of all samples and matched samples as well as the distribution of demographic and personal characteristic stratified by invasive breast cancer. There are 4,046 matched pairs.}
\centering
\resizebox{1\textwidth}{!}{
\begin{tabular}{lccccc}
  \hline \multirow{3}{*}{\begin{tabular}{c}\textbf{Characteristic}\end{tabular}}
   & \multirow{3}{*}{\begin{tabular}{c}Non-invasive breast cancer \\Non-case \\  ($n = 74993$ All controls)\end{tabular}} &
   \multirow{3}{*}{\begin{tabular}{c}Non-invasive breast cancer \\Non-case \\ ($n = 4046$ Matched controls)\end{tabular}} &
   \multirow{3}{*}{\begin{tabular}{c}Invasive breast cancer \\Case  \\  ($n = 4046$ )\end{tabular}} &  \multirow{3}{*}{\begin{tabular}{c}Standardized \\mean difference \\ (Before matching) \end{tabular}}   & \multirow{3}{*}{\begin{tabular}{c}Standardized \\mean difference \\ (After matching) \end{tabular}}\\ \\ \\
  \hline
    \textbf{Breast cancer, No.(\%)} \\
  \hspace{0.3cm} No & 74132(98.9) & 3969(98.1) &  0(0.0) & & \\ 
  \hspace{0.3cm} Yes & 861(1.1)	 & 77(1.9) & 4046(100.0) & &  \\ 
   \textbf{Hormone sensitive breast cancer, No.(\%)} \\
  \hspace{0.3cm} No & 556(0.8) & 58(1.5) &  869(21.0) & & \\ 
  \hspace{0.3cm} Yes & 303(0.4)	 & 19(0.5) & 3177(79.0) & &  \\ 
  \hspace{0.3cm} Missing & 74134(98.8)	 & 3969(98.0) & 0(0.0) & &  \\
     \textbf{Alcohol intake 18+ oz a week, No.(\%)} \\
  \hspace{0.3cm} No & 73498(98.0) & 3980(98.4) &  3943(97.5) & & \\ 
  \hspace{0.3cm} Yes & 1495(2.0)	 & 66(1.6) & 103(2.5) & &  \\ 
  \textbf{Age at enrollment} & 63(57, 69) & 64(58,69) & 64(58,69)	 & 0.04 & 0.00\\
  \textbf{Body Mass Index (BMI)} & 26.0(23.2,29.9) & 26.2(23.4,29.7) & 26.2(23.3,30.0) & 0.21 & 0.07 \\
  \textbf{Ethnicity} \\
  \hspace{0.3cm} American Indian or Alaskan Native & 304(0.4) & 11(0.3) & 13(0.3) & 0.04 & 0.01 \\ 
  \hspace{0.3cm} Asian or Pacific Islander & 1447(1.9) & 51(1.3) & 50(1.2) & 0.06 & -0.01 \\ 
  \hspace{0.3cm} Black or African American & 5681(7.6) & 224(5.5) & 224(5.5) & 0.23 & 0.04 \\ 
  \hspace{0.3cm} Hispanic/Latino & 2524(3.4) & 78(1.9) & 78(1.9) & 0.08 & 0.00 \\ 
  \hspace{0.3cm} White (not of Hispanic origin) & 64035(85.0) & 3641(90.0) & 3641(90.0) & -0.26 & -0.03 \\ 
  \hspace{0.3cm} Other & 804(1.1) & 28(0.7) & 29(0.7) & 0.02 & 0.00 \\ 
  \hspace{0.3cm} Missing & 198(0.3) & 13(0.3) & 11(0.3) & 0.03 & 0.00 \\ 
  \textbf{Family Income} \\
  \hspace{0.3cm} Less than \$10000 & 2804(3.7) & 90(2.2) & 92(2.3) & 0.18 & 0.06 \\ 
  \hspace{0.3cm} \$10000 to \$19999 & 7570(10.0) & 342(8.5) & 342(8.5) & 0.18 & 0.03 \\ 
  \hspace{0.3cm} \$20000 to \$34999 & 15987(21.0) & 850(21.0) & 845(21.0) & 0.09 & -0.01 \\ 
  \hspace{0.3cm} \$35000 to \$49999 & 14125(18.8) & 792(20.0) & 793(20.0) & -0.04 & -0.02 \\ 
  \hspace{0.3cm} \$50000 to \$74999 & 14329(19.1) & 827(20.0) & 823(20.0) & -0.11 & -0.02 \\ 
  \hspace{0.3cm} \$75000 to \$99999 & 6847(9.1) & 411(10.0) & 412(10.0) & -0.12 & -0.01 \\ 
  \hspace{0.3cm} \$100000 to \$149999 & 5119(6.8) & 290(7.2) & 292(7.2) & -0.13 & -0.01 \\ 
  \hspace{0.3cm} \$150000 or more & 2959(3.9) & 166(4.1) & 164(4.1) & -0.12 & 0.00 \\ 
  \hspace{0.3cm} Don't know & 2027(2.7) & 98(2.4) & 105(2.6) & 0.04 & 0.01 \\ 
  \hspace{0.3cm} Missing & 3226(4.3) & 180(4.4) & 178(4.4) & 0.03 & 0.01 \\ 
  \textbf{Blood Pressure} \\
  \hspace{0.3cm} Systolic & 126(114,138) & 126(114,138) & 126(114,138) & 0.10 & 0.03 \\ 
  \hspace{0.3cm} Diastolic & 74(70,80) & 74(70,80) & 74(70,80) & 0.01 & 0.01 \\ 
  \textbf{Education, No.(\%)} \\
  \hspace{0.3cm} Didn't go to school & 52($<0.1$) & 1($<0.1$) & 2($<0.1$) & 0.01 & 0.00 \\ 
  \hspace{0.3cm} Grade school (1-4 years) & 188(0.3) & 7(0.2) & 7(0.2) & 0.06 & 0.03 \\ 
  \hspace{0.3cm} Grade school (5-8 years) & 671(0.9) & 10(0.2) & 10(0.2) & 0.11 & 0.05 \\ 
  \hspace{0.3cm} Some high school (9-11 years) & 2404(3.2) & 89(2.2) & 88(2.2) & 0.15 & 0.01\\ 
  \hspace{0.3cm} High school diploma or GED & 11624(16.0) & 548(14.0) & 546(13.0) & 0.10 & 0.01 \\ 
  \hspace{0.3cm} Vocational or training school & 7067(9.4) & 349(8.6) & 348(8.6) & 0.08 & 0.00 \\ 
  \hspace{0.3cm} Some college or Associate Degree & 20358(27.0) & 1061(26.0)	 & 1061(26.0) & 0.00 & 0.00 \\ 
  \hspace{0.3cm}College graduate or Baccalaureate Degree & 8702(11.6) & 517(13.0) & 518(13.0) & -0.10 & 0.00 \\ 
  \hspace{0.3cm} Some post-graduate or professional & 9096(12.4) & 579(14.0) & 577(14.0) & -0.09 & -0.01  \\ 
  \hspace{0.3cm} Master's Degree & 12105(16.0) & 741(18.0) & 745(18.0) & -0.11 & -0.02  \\ 
  \hspace{0.3cm} Doctoral Degree (Ph.D,M.D.,J.D.,etc.) & 2161(2.9) & 116(2.9) & 116(2.9)  & -0.06 & -0.01 \\ 
  \hspace{0.3cm} Missing & 592(0.8) & 28(0.7) & 28(0.7) & 0.02 & 0.01  \\ 
  \textbf{HRT usage ever, No. (\%)} \\
  \hspace{0.3cm} No & 29717(40.0) & 1407(34.8) & 1433(35.5) & 0.04 & 0.01 \\ 
  \hspace{0.3cm} Yes & 45276(60.0) & 2639(65.2) & 2610(64.5) & 0.04 & -0.01 \\ 
  \textbf{Current smoking status, No.(\%)} \\
   \hspace{0.3cm} Nonsmoker & 34784(46.3) & 2015(50.0) & 1987(49.0) & -0.03 & 0.01 \\ 
  \hspace{0.3cm} Current smoker & 5076(6.8) & 239(5.8) & 271(6.8) & 0.00 & 0.00 \\
  \hspace{0.3cm} Missing & 35133(46.9)	 & 1792(44.2) & 1788(44.2) & -0.06 & 0.00 \\
   \hline
\end{tabular}}
\end{table}

\begin{table}
\hspace*{-2cm}
\caption{\label{table:10} Various sensitivity analyses for statistical inference on the attributable fraction among the exposed ($\text{AF}_e$). The last five columns report the $95\%$ minimum prediction interval of $\text{AF}_e$ for different methods in different combinations of $\Gamma$ and $\Theta$. The bold number of $\Gamma$ and $\Theta$ represents the maximum magnitude of sensitivity parameters that the rejection of $H_0: \bm \delta = \bm 0$ is insensitive for corresponding methods.  }
\centering
\hskip-2.0cm
\begin{tabular}{c  c c c c c c c c c} 
 \hline
 \multicolumn{10}{c}{$95\%$ minimum prediction interval for $\text{AF}_e$}   \\
 \hline 
 $\Gamma$ & $\Theta$ &    &  &  & Merged & Stouffer  & Fisher  & Truncated $\tilde{\alpha} = 0.10$  & Bonferroni     \\ 
 \hline
 $1.00$ & $1.00$ &  & &  &$\text{AF}_e \geq 16.50 \%$ & $\text{AF}_e \geq 15.53 \%$ &$\text{AF}_e \geq 18.45\%$  & $\text{AF}_e \geq 22.33\%$  & $\text{AF}_e \geq 22.33\%$ \\ 
 $1.08$ & $1.00$ & & & & $\text{AF}_e \geq 10.68 \%$ & $\text{AF}_e \geq 12.62 \%$ &$\text{AF}_e \geq 13.59 \%$  & $\text{AF}_e \geq  17.48 \%$ & $\text{AF}_e \geq 18.45 \%$  \\ 
 $1.16$ & $1.00$ &  & & & $\text{AF}_e \geq 4.85 \% $ & $\text{AF}_e \geq 8.74 \%$ &$\text{AF}_e \geq 8.74 \%$  & $\text{AF}_e \geq 12.62 \%$  & $\text{AF}_e \geq 13.59 \%$ \\ 
$\mathbf{1.22}$ & $\mathbf{1.00} $ &  & & & $\textbf{AF}_e \mathbf{\geq 0.00 \%} $ & $\text{AF}_e \geq 6.80 \%$ &$\text{AF}_e \geq 4.85 \%$  & $\text{AF}_e \geq 9.71 \%$  & $\text{AF}_e \geq 9.71 \%$ \\ 
$1.26$ & $1.00$ &  & & & $\text{AF}_e \geq 0.00 \%$ & $\text{AF}_e \geq 4.85 \%$ &$\text{AF}_e \geq  1.94 \%$  & $\text{AF}_e \geq 6.80 \%$   & $\text{AF}_e \geq 7.77 \%$\\ 
$\textbf{1.30}$ & $\textbf{1.00}$ &  & & & $\text{AF}_e \geq 0.00 \%$ & $\text{AF}_e \geq 2.91 \%$ &$ \textbf{AF}_e \mathbf{\geq 0.00 \%}$  &  $\text{AF}_e \geq 4.85 \%$   & $\text{AF}_e \geq 5.83 \%$\\ 
$1.34$ & $1.00$ &  & & & $\text{AF}_e \geq 0.00 \%$ & $\text{AF}_e \geq 1.94 \%$ &$\text{AF}_e \geq  0.00 \%$  & $\text{AF}_e \geq 2.91 \%$   & $\text{AF}_e \geq 2.91 \%$\\ 
$\textbf{1.38}$ & $\textbf{1.00}$ &  & & & $\text{AF}_e \geq 0.00 \%$ & $\textbf{AF}_e \mathbf{\geq 0.00 \%}$ &$\text{AF}_e \geq 0.00 \%$  &  $\textbf{AF}_e \mathbf{\geq 0.00 \%}$   & $\text{AF}_e \geq 0.97 \%$\\ 
$\textbf{1.40}$ & $\textbf{1.00}$ &  & & & $\text{AF}_e \geq 0.00 \%$ & $\text{AF}_e \geq 0.00 \%$ &$\text{AF}_e \geq 0.00 \%$  &  $\text{AF}_e \geq 0.00 \%$   & $\textbf{AF}_e \mathbf{\geq 0.00 \%}$\\ 
 $1.04$ & $1.10$ &  & & & $\text{AF}_e \geq 5.83 \%$ & $\text{AF}_e \geq 9.71 \%$ &$\text{AF}_e \geq 9.71 \%$  & $\text{AF}_e \geq 13.59 \%$   & $\text{AF}_e \geq 14.56 \%$\\ 
$1.08$ & $1.10$ & & & & $\text{AF}_e \geq 2.91 \%$ & $\text{AF}_e \geq 7.77 \%$ &$\text{AF}_e \geq 6.80 \%$  & $\text{AF}_e \geq 11.65 \%$   & $\text{AF}_e \geq 11.65 \%$\\ 
$\textbf{1.12}$ & $\textbf{1.10}$ & & & & $\textbf{AF}_e \mathbf{\geq 0.00 \%}$ & $\text{AF}_e \geq 5.85 \% $ &$\text{AF}_e \geq 3.88 \%$  & $\text{AF}_e \geq 8.74 \%$  & $\text{AF}_e \geq 9.71 \%$ \\ 
$\textbf{1.18}$ & $\textbf{1.10}$ & & & & $\text{AF}_e \geq 0.00 \%$ & $\text{AF}_e \geq 3.88 \%$ &$\textbf{AF}_e \mathbf{\geq 0.00 \%}$  &  $\text{AF}_e \geq 4.85 \%$   & $\text{AF}_e \geq 5.83 \%$ \\ 
$\textbf{1.26}$ & $\textbf{1.10}$ & & & & $\text{AF}_e \geq 0.00 \%$ & $\textbf{AF}_e \mathbf{\geq 0.00 \%}$ &$\text{AF}_e \geq 0.00 \%$  &  $\textbf{AF}_e \mathbf{\geq 0.00 \%}$   & $\text{AF}_e \geq 0.97 \%$ \\ 
$\textbf{1.28}$ & $\textbf{1.10}$ & & & & $\text{AF}_e \geq 0.00 \%$ & $\text{AF}_e \geq 0.00 \%$ &$\text{AF}_e \geq 0.00 \%$  &  $\text{AF}_e \geq 0.00 \%$   & $\textbf{AF}_e \mathbf{\geq 0.00 \%}$ \\ 
 \hline
\end{tabular}
\end{table}

\section{Discussion: Summary and Extension}
\label{sec: conclusion}

Often there are distinctions between cases, and some types of cases are more responsive to the exposure than others in disease analysis. This case definition information can be related to morphologic, immunophenotypic, genetic, and clinical features of cases. For example, non-Hodgkin's lymphoma is a group of illnesses that is characterized by the malignant transformation of lymphoid cells. To understand patterns of etiologic commonality and heterogeneity for non-Hodgkin's lymphomas, \citet{morton2008etiologic} presented the first systematic comparison of risks by lymphoma subtype for a broad range of putative risk factors in a population-based case-referent study. Their study found that patients have higher risk for some non-Hodgkin lymphoma subtypes like marginal zone lymphomas than subtypes like chronic lymphocytic leukemia/small lymphocytic lymphoma when the patients were exposed to some autoimmune conditions (more detailed information can be found in Table 1-4 in \citet{morton2008etiologic}).

In this paper, we first propose a methodology to use case definition information for robust statistical inference on the attributable fraction among the exposed in terms of unmeasured confounding. Then to tackle the selection bias issue, we adapt a new sensitivity analysis framework that includes sensitivity parameters for both selection bias and unmeasured confounding in matched case-referent studies.  In our developed methodology, we implement the Stouffer's $Z$-score method, the Fisher type of combining $P$-values method, and the Bonferroni method. We study comprehensively the power of sensitivity analysis and design sensitivity by different settings of simulation. In our application, we find that implementing our methodology of using case description information, the inference on the attributable fraction among the exposed is more robust to unmeasured confounding compared with the conclusion without using the case definition information. 

In practice, we can leverage the case definition information when (i) it offers a larger effect size, and (ii) the proportion of one subtype of cases among all cases that changed to another subtype under the influence of the exposure can be controlled by the sensitivity parameter $\Theta$ for selection bias. The proposed methodology and sensitivity analysis framework are based on the inference of the attributable effect in matched case-referent studies. It can further be extended to different causal design settings such as matched cohort studies. Additionally, post-matching adjustment for covariates in the inference for the attributable risk could be another extension. These are possible future directions for the extension of our framework.

\section*{Acknowledgment}
This research was partially supported by NIH RF1AG063481 grant.

\bibliographystyle{apalike}
\bibliography{reference}

\newpage

\begin{center}
    \Large Supplementary Materials for ``Using Case Description Information to Reduce Sensitivity to Bias for the Attributable Fraction Among the Exposed''
\end{center}

\section*{Appendix I: Confidence Set of $\bm \delta$ and Attributable Effects}

Table \ref{tab: AE and CS} represents a $2 \times 2$ contingency table where the count in the upper-left cell is reduced by $a$, while the count in the lower-left cell is correspondingly increased by $a$, resulting in adjusted row totals. The observed table corresponds to the special case where $a = 0$. If $D-a$ in Table \ref{tab: AE and CS} has the extended hypergeometric distribution, then $\mathbb{P}(D - a = b)$ is $\pi(a,b)$ given by 

\begin{align*}
    \pi(a,b) = \left\{ \binom{L-a}{b} \binom{N-L+a}{m-b} \Gamma^b   \right\} \Bigg/ \left\{ \sum_{c = \max(0,m+L-a-N) }^{\min(m,L-a)}  \binom{L-a}{c} \binom{N-L+a}{m-c} \Gamma^c   \right\},
\end{align*}
if $\min(m,L)\geq b \geq \max(0,m+L-a-N)$, and is given by $\pi(a,b) = 0$ otherwise; and $\mathbb{P}(D-a\geq k)$ is $\lambda(a,k) = \sum_{b=\max(k,m+L-a-N)}^{\min(m,L-a)}\pi(a,b)$. 

In drawing inferences, various values of $a$ are considered. Proposition \ref{prop: confidence set} compares the tail probability $\lambda(a, k)$ across these values of $a$. It is important to note that this comparison is unconventional, as it involves distributions on $2 \times 2$ tables with differing marginal totals. Proposition \ref{prop: confidence set} underpins the assertion that the set of $\bm{\delta}_0$ values not rejected corresponds to an interval of values for the attributable effect, $A_0 = \sum_{i,j} Z_{ij} \delta_{0_{ij}}$. Specifically, Proposition \ref{prop: confidence set} examines the probability that $D-a \geq k$ for two values of $a$. However, it should be emphasized that changes in $a$ alter both the margins of Table \ref{tab: AE and CS} and the associated extended hypergeometric distribution of $D$.

\begin{proposition}[\citet{rosenbaum2001effects}]\label{prop: confidence set}
    If $a \geq a^*$, then $\lambda(a^*,k) \geq \lambda(a,k)$. 
\end{proposition}

\begin{proof}
    It suffices to prove this for $a^* = a - 1$ as the general result follows by induction. Moreover, it suffices to prove that if $b < b^*$ then 
    \begin{align*}
        \pi(a,b) \pi(a-1, b^*) \geq \pi(a,b^*) \pi(a-1, b),
    \end{align*}
by the familiar fact that the ordering implies stochastic ordering \citep{eaton1987lectures}. There are two cases to be considered. First, it is straightforward to verify that if $\pi(a,b)\pi(a-1,b^*) = 0$, then $\pi(a,b^*)\pi(a-1,b)=0$ also. Secondly, if $\pi(a,b^*)\pi(a-1,b)>0$ then $\pi(a,b)\pi(a-1,b^*)$ and $\pi(a,b^*)\pi(a-1,b)$ have the same denominator, and simple algebra applied to the numerators suffices to complete the proof. 
\end{proof}

\begin{table}
\hspace{-5cm}
\label{tab: AE and CS}
\caption{ A $2 \times 2$ table adjusted for attributable effects }
\centering
\begin{tabular}{|| c c c c||}
\hline
 Response & Treated  & Control  & Total   \\
 \hline\hline
1 & $D-a$ & $L - D$  & $L - a$  \\ 
2 & $m - D + a$ & $N - m - L + D$ & $N - L + a$  \\  
Total & $m$ & $N-m$    & $N$  \\
 \hline
\end{tabular}
\end{table}

\newpage

\section*{Appendix II: Justification of Approximation Using Asymptotic Separability}

A key question is whether it is plausible that no more than $a$ treated events attributable to treatment. Formally, this is equivalent to assessing whether the number of units for whom the treatment effect is $\delta_{ij} = r_{T_{ij}} - r_{C_{ij}} = 1$ satisfies $A = \sum_{i,j}Z_{ij}\delta_{ij}\leq a$. The goal is to determine whether there exists $\bm \delta_0 \in \mathfrak{S}$ such that $A =\sum_{i,j}Z_{ij}\delta_{ij} \leq a $ where $\mathfrak{S}$ is the confidence set for $\bm \delta$. Alternatively, if no $\bm \delta_0$ in the confidence set $\mathfrak{S}$ satisfies this criterion, we can conclude that $A \leq a$ is implausible. 

Three approaches are proposed to address this question. The first approach provides exact randomization inference and is theoretically rigorous but computationally intensive. The second approach offers a large-sample approximation based on the normal distribution, but is nearly as tedious as the first approach. The third approach employs asymptotic separability, a practical method that allows a single $\bm \delta_0$ to be examined to infer properties of the entire confidence set. This approach is particularly advantageous due to its simplicity and efficiency.

In the exact inference, one evaluates whether there exists a $\bm \delta_0 \in \mathfrak{S}$ for which $A = \sum_{i,j}Z_{ij}\delta_{0_{ij}} \leq a$. If no such $\bm \delta_0$ exists, every configuration implying $A \leq a$ is rejected as implausible. However, exact inference requires explicit computation of the confidence set $\mathfrak{S}$ and verification of all possible $\bm \delta_0 \in \mathfrak{S}$, which can be tedious and computationally intensive. The second approach is to use a large sample approximation based on the normal distribution to test each $\bm \delta_0$. But we still need to verify all possible $\bm \delta_0 \in \mathfrak{S}$ which can be computational intensive as well.

We now discuss the third approach that uses asymptotic separability and is computationally efficient. We present how to check a single $\bm \delta_0$ and on the basis of that one $\bm \delta_0$ reach a conclusion about all of $\mathfrak{S}$ instead of checking every $\bm \delta_0 \in \mathfrak{S}$ by using the notion of asymptotic separability \citep{gastwirth2000asymptotic} where $\mathfrak{S}$ is the confidence set for $\bm \delta$. The approximation will find one $\bm \delta_0$ with $a = \sum_{i,j}Z_{ij}\delta_{0_{ij}}$ that is hardest to reject; it then must be checked whether this $\bm \delta_0$ is in $\mathfrak{S}$. We call this $\bm \delta_0$ the ``worst case'' which is the one that yields the highest $P$-value in the set of $\bm \delta_0$ that are compatible with $a$ when testing with sensitivity parameter $\Gamma$. If this $\bm \delta_0 \in \mathfrak{S}$, then it is plausible that $a \geq \sum_{i,j} Z_{ij} \delta_{0_{ij}}$, and it is plausible that $a$ or fewer events among treated subjects were caused by treatment; however, if this $\bm \delta_0 \notin \mathfrak{S}$, then this $\bm \delta_0$ and all others with $a \geq \sum_{i,j} Z_{ij} \delta_{0_{ij}}$ have been rejected, so it is not plausible that $a$ or fewer events are attributable to treatment. This is a large-sample approximation based on a limiting normal distribution characterized by its expectation and variance.

The approximation will construct one $\bm \delta_0$ such that in $a$ of the matched sets with $\sum_{j=1}^J Z_{ij}\delta_{0_{ij}} = 1$, the event was indeed caused by the treatment, so that $a = \sum_{i,j} Z_{ij}\delta_{0_{ij}}$. The approximation considers each matched set, one at a time, and determines by how much the maximum expected contribution $\tilde{\tilde{\pi}}_i$ from this set would decline if it were assumed that a treated subject in this set had an event caused by the treatment. The decline is given by $\bar{\bar{\lambda}}_i - \bar{\lambda}_i$ ($\bar{\bar{\lambda}}_i$ and $\bar{\lambda}_i$ are defined in the Appendix III), where $\bar{\bar{\lambda}}_i$ is the value of $ \tilde{\tilde{\pi}}_i$ if the event in matched set $i$ was not attributable so that $\sum_{j=1}^J Z_{ij}\delta_{0_{ij}} = 0$ and $R_{i^+} = r_{C{i^+}}$ whereas $\bar{\lambda}_i$ is the value of $ \tilde{\tilde{\pi}}_i$ if the event in matched set $i$ was attributable so that $\sum_{j=1}^J Z_{ij}\delta_{0_{ij}} = 1$ and $R_{i^+} = r_{C_{i^+}} + 1$.

The $\bm \delta_0$ formed using the $a$ smallest declines, $\bar{\bar{\lambda}}_i - \bar{\lambda}_i$, has the highest expected number of events among treated subjects. Ties are broken by picking the sets that do the least to decrease the variance, the decrease being $\bar{\bar{w}}_i - \bar{w}_i$ ($\bar{\bar{w}}_i$ and $\bar{w}_i$ are defined in the Appendix II). The following proposition shows that as $I \rightarrow \infty$, the largest approximation significance level (\ref{equ: approx upper}) is obtained from the normal reference distribution with the highest expectation, and where there are several normal distributions with the same highest expectation, the one among these with the highest variance. A normal distribution with a high expectation and variance attaches high probabilities to large values. Asymptotic separability is an approximation that works as $I \rightarrow \infty$.

Denote $\mathbb{P}(B_i = 1) = \hat \pi_i$, $\hat{\sigma}_I = \sqrt{\sum_i \hat \pi_i (1-\hat \pi_i) }$ and $\tilde{\tilde{\sigma}}_I = \sqrt{\sum_i \tilde{\tilde{\pi}}_i (1-\tilde{\tilde{\pi}}_i) } $. We know that $ \tilde{\tilde{\pi}}_i \geq \hat \pi_i$ by Section 2.3. And if $ \tilde{\tilde{\pi}}_i = \hat \pi_i$, then $\tilde{\tilde{\pi}}_i (1 - \tilde{\tilde{\pi}}_i) \geq \hat \pi_i (1- \hat \pi_i) $ for all $i$. Define $\Delta = \{\tilde{\tilde{\pi}}_i - \hat \pi_i: i = 1,\cdots, I  \}$, and let $\delta = \inf(\Delta)$. Fix $k > 0$, $\hat k_I = k \hat \sigma_I + \sum_i \hat \pi_i$,  $\tilde{\tilde{S}}_I = \sum_i Z_{ij} r_{T_{ij}} $, $\hat S_I =\sum_i Z_{ij} r_{C_{ij}} =\sum_i B_i$. Note that by the assumption of non-negative treatment effect, we have $r_{T_{ij}} \geq r_{C_{ij}}$, therefore $\tilde{\tilde{S}}_I \geq \hat S_I$. 

\begin{proposition}[Asymptotic Separability \citet{gastwirth2000asymptotic}]
Assuming that there are two numbers, $0 < \bar \nu^2 \leq \bar{\bar{\nu}}^2 < \infty$ so that $\bar \nu^2 \leq \tilde{\tilde{\pi}}_i (1 - \tilde{\tilde{\pi}}_i)  \leq \bar{\bar{\nu}}^2 $ and  $\bar \nu^2 \leq \hat \pi_i (1 - \hat \pi_i)  \leq \bar{\bar{\nu}}^2 $ for all $i$. If $\delta > 0$, then for all $\epsilon > 0$ there is an $I^*$ such that $\mathbb{P} (\tilde{\tilde{S}}_I \geq \hat k_I   ) \geq \mathbb{P} (\hat S_I \geq \hat k_I   ) - \epsilon$ for $I \geq I^*$. 
\end{proposition}

\begin{proof}
By the central limit theorem, for $\epsilon >0$ there is an $I_1$ such that for $I \geq I_1$ both
\begin{align*}
    \mathbb{P}\{ (\tilde{\tilde{S}}_I - \sum_i \tilde{\tilde{\pi}}_i  )/\tilde{\tilde{\sigma}}_I \geq k   \} \geq 1 - \Phi(k) - \epsilon/2
\end{align*}
and 
\begin{align*}
    \mathbb{P}\{ (\hat S_I - \sum_i \hat \pi_i  )/\hat \sigma_I \geq k   \} \geq 1 - \Phi(k) + \epsilon/2. 
\end{align*}
By simple algebra, 
\begin{align*}
    \mathbb{P}(\tilde{\tilde{S}}_I \geq \hat k_I ) = \mathbb{P} \left\{  \frac{\tilde{\tilde{S}}_I - \sum_i \tilde{\tilde{\pi}}_i }{ \tilde{\tilde{\sigma}}_I }   \geq \frac{\sum_i (\hat \pi_i - \tilde{\tilde{\pi}}_i ) + k \hat \sigma_I  ) }{\tilde{\tilde{\sigma}}_I }         \right\}.
\end{align*}
If there were an $I_2$ such that, for all $I \geq I_2$,
\begin{align*}
    k \geq \frac{\sum_i (\hat \pi_i - \tilde{\tilde{\pi}}_i ) + k \hat \sigma_I  ) }{\tilde{\tilde{\sigma}}_I },    
\end{align*}
then, for all $I \geq \max(I_1, I_2)$, the probability $  \mathbb{P}(\tilde{\tilde{S}}_I \geq \hat k_I )$ would be greater than or equal to $1 - \Phi(k) - \epsilon/2$, and $\mathbb{P} (\hat S_I \geq \hat k_I   ) $ would be less than $1 - \Phi(k) + \epsilon/2$, so the proof would be complete. We now show that such an $I_2$ exists. Equivalently, we show that there is an $I_2$ such that, for all $I \geq I_2$, 
\begin{equation} \label{equ:ineq}
        \frac{1}{I} \sum_i \tilde{\tilde{\pi}}_i - \hat \pi_i \geq \frac{k}{\sqrt{I}} \left\{   \sqrt{\frac{1}{I} \sum_i \hat{\pi}_i (1-\hat{\pi}_i)  } - \sqrt{ \frac{1}{I} \sum_i \tilde{\tilde{\pi}}_i (1-\tilde{\tilde{\pi}}_i) }   \right\}.  
\end{equation}
Let $A_I = \{i,  \tilde{\tilde{\pi}}_i (1-\tilde{\tilde{\pi}}_i) - \hat{\pi}_i (1-\hat{\pi}_i)  < 0  \}$ and let $O_I = |A_I|/I$, where the number of elements in a set $V$ is denoted $|V|$. If $i \in A_I$ then $\tilde{\tilde{\pi}}_i - \hat \pi_i \geq \delta$, so
\begin{align*}
    \frac{1}{I} \sum_i \tilde{\tilde{\pi}}_i - \hat \pi_i \geq \delta O_I. 
\end{align*}
To complete the proof, it suffices to show that the right-hand side of inequality (\ref{equ:ineq}) is less than $\delta O_I$ for all sufficiently large $I$. Define $\tilde{\tilde{\psi}}_I$ and $\hat \psi_I $ to be
\begin{align*}
    \hat \psi_I = \frac{1}{I - |A_I|} \sum_{i \notin A_I} \hat \pi_i (1- \hat \pi_i)
\end{align*}
and 
\begin{align*}
   \tilde{\tilde{\psi}}_I = \frac{1}{I - |A_I|} \sum_{i \notin A_I} \tilde{\tilde{\pi}}_i (1 - \tilde{\tilde{\pi}}_i),
\end{align*}
if $|A_I| < I$, and to be 0 if $|A_I| = I$. By the definition of $A_I$, if $i\notin A_I$ then $\tilde{\tilde{\pi}}_i (1 - \tilde{\tilde{\pi}}_i) - \hat \pi_i (1- \hat \pi_i) \geq 0 $, so $0 \leq   \hat \psi_I \leq \tilde{\tilde{\psi}}_I$. Also, since $\hat \pi_i (1 - \hat \pi_i) \leq  \bar{\bar{\nu}}^2$ and $\tilde{\tilde{\pi}}_i (1 - \tilde{\tilde{\pi}}_i) \geq  \bar \nu^2$. 
It follows that 

\begin{align*}
   \frac{k}{\sqrt{I}} \left\{   \sqrt{\frac{1}{I} \sum_i \hat{\pi}_i (1-\hat{\pi}_i)  } - \sqrt{ \frac{1}{I} \sum_i \tilde{\tilde{\pi}}_i (1-\tilde{\tilde{\pi}}_i) }   \right\} \leq \frac{k}{\sqrt{I}} \left\{ \sqrt{O_I \bar{\bar{\nu}}^2 + (1 - O_I) \hat \psi_I} -  \sqrt{O_I \bar{\nu}^2 + (1 - O_I) \hat \psi_I}  \right\}
\end{align*}
which is less than or equal to 
\begin{equation} \label{inequ}
        \frac{k}{\sqrt{I}} \frac{O_I }{2  \sqrt{O_I \bar{\nu}^2 + (1 - O_I) \hat \psi_I}  } (\bar{\bar{\nu}}^2 - \bar{\nu}^2  ) \leq \frac{k}{\sqrt{I}}\frac{O_I}{2 \bar{\nu}} (\bar{\bar{\nu}}^2 - \bar{\nu}^2  ),
\end{equation}
because $\sqrt{ax'+b} - \sqrt{ax+b}\leq \frac{a}{2\sqrt{ax+b}}(x'-x)$ for non-negative $a,b,x,x'$ with $ax+b >0$ and $ax'+b > 0$. But the right-hand side of inequality (\ref{inequ}) is a non-negative constant multiple of $O_I/\sqrt{I}$ and so can be made less than or equal to $\delta O_I$ for sufficiently large $I$. 
\end{proof}

\section*{Appendix III: Upper Bound of the Attributable Effect and the General Procedure Finding the Attributable Effect}
 To construct an upper bound on a prediction interval for the attributable effect, we implement an analogous strategy by approximating the lower bound of $\mathbb{P} \left(\sum_{i=1}^I B_i \geq T -A_0 \right) $ using 

\begin{equation}
\label{equ: approx lower}
1 - \Phi \left( \frac{T -A_0 - \sum_{i=1}^I \tilde{\pi}_i }{\sqrt{\sum_{i=1}^I \tilde{\pi}_i (1 - \tilde{\pi}_i)  } }     \right)
\end{equation}
when $I \rightarrow \infty$. Starting at $A_0 = 0$, if (\ref{equ: approx lower}) is less than $1 - \alpha$, $A_0$ should be increased by one until it is not. Assume that $A_0 = a^*$,  then  $\{A: A \leq a^*\}$ is a one-sided $100 \times (1 - \alpha) \% $ maximum prediction interval for the attributable effect $A$.

Next, we demonstrate a general procedure to find the attributable effect. The following steps find the $\bm \delta_0$ with $a = \sum_{i,j} Z_{ij} \delta_{0_{ij}}$ that maximize the expectation $\sum_{i} \tilde{\pi}_i$, and if several $\bm \delta_0$ do this, then finds among these the one that also maximizes the variance $\sum_{i} \tilde{\pi}_i (1-\tilde{\pi}_i)$:

\begin{itemize}
 \item[1.] If $a \geq \sum_{i,j} Z_{ij} R_{ij}$, then $a$ or fewer of the treated subjects had events, whether caused by exposure or not, so it is certain that $a$ or fewer had events caused by the exposure; stop. Otherwise, if $\sum_{i,j}Z_{ij} R_{ij} > a$, then continue 2.
    \item[2.] For each matched set $i$ with $\sum_{j=1}^J Z_{ij}R_{ij} = 1$, calculate
    \begin{align*}
        \bar{\bar{\lambda}}_i &= \frac{\Gamma Z_{i^+}R_{i^+}}{\Gamma Z_{i^+}R_{i^+} + J - Z_{i^+}R_{i^+}} \\
        \bar{\lambda}_i &= \frac{\Gamma Z_{i^+}(R_{i^+}-1)}{\Gamma Z_{i^+}(R_{i^+}-1) + J - Z_{i^+}(R_{i^+}-1)} \\
        \bar{\bar{w}}_i &= \bar{\bar{\lambda}}_i (1 - \bar{\bar{\lambda}}_i) \\
        \bar{w}_i & =  \bar{\lambda}_i (1 -  \bar{\lambda}_i)
    \end{align*}
  \item[3.] Select exactly $a$ of the matched sets $i$ with $\sum_{j=1}^J Z_{ij}R_{ij} = 1$ having the smallest values of $\bar{\bar{\lambda}}_i - \bar{\lambda}_i$. If ties among the $\bar{\bar{\lambda}}_i - \bar{\lambda}_i$ mean that several different groups of $a$ matched sets all have the smallest values of $\bar{\bar{\lambda}}_i - \bar{\lambda}_i$, then among these several groups with the smallest $\bar{\bar{\lambda}}_i - \bar{\lambda}_i$, pick any one group with the smallest values of $\bar{\bar{\lambda}}_i - \bar{\lambda}_i$. For the selected $a$ matched sets, let $\bar{\bar{\pi}}_i = \bar{\lambda}_i$, whereas for the remaining $I - a$ matched sets, let $\bar{\bar{\pi}}_i =  \bar{\bar{\lambda}}_i$. If $a + \sum_{i} \bar{\bar{\pi}}_i \geq \sum_{i,j}Z_{ij}R_{ij}$, then there is one $\bm u$ and one $\bm \delta_0$ with $a = \sum_{i,j} Z_{ij} \delta_{0_{ij}}$ events caused by treatment that would lead us to expect more than the observed number $\sum_{i,j}Z_{ij}R_{ij}$ of events among treated subjects; conclude that this $a$ is plausible and stop; otherwise, continue with step 4.
  \item[4.] Calculate the large-sample approximation to the upper bound on the significance level 
  \begin{align*}
      1 - \Phi (\frac{ (T - a) - \sum_i \bar{\bar{\pi}}_i  }{\sqrt{\sum_{i} \bar{\bar{\pi}}_i (1 - \bar{\bar{\pi}}_i ) }}).
  \end{align*}
  If it is small ($< 0.05$), then conclude that every compatible $\bm \delta_0$ with $a \geq \sum_{i,j} Z_{ij} \delta_{0_{ij}}$ that it is not plausible that exposure to treatment caused $a$ or fewer events. Our setting is a special case of this general procedure in the sense that $R_{i^+} = 1, J = 2$ for every $i$. 
\end{itemize}

\section*{Appendix IV: Proof of Proposition \ref{prop:design sensitivity}}

\begin{proof}[Proof of Proposition \ref{prop:design sensitivity}]
For a fixed $\Theta^*$, if the sensitivity analysis is performed at a $\Gamma > \tilde{\Gamma}$, then all $L$ bounds on $P$-values are tending to $1$ as $I \rightarrow \infty$. So the combined $P$-value is tending to $1$. Let $l'$ be any $l$ such that $\tilde{\Gamma}_{l'} = \tilde{\Gamma}_{\max}$. If the sensitivity analysis is performed with $\Gamma < \tilde{\Gamma}_{\max} = \tilde{\Gamma}$, then the combined $P$-value is tending to $0$ as $I \rightarrow \infty$. So the power of the sensitivity analysis using the combined $P$-value is tending to $1$ when $\Gamma < \tilde{\Gamma}_{\max}$ and tending to $0$ when $\Gamma > \tilde{\Gamma}_{\max}$.

\end{proof}

\section*{Appendix V: Power Calculation of Sensitivity Analysis}

We show the detailed power calculation in the following. Assuming that the null hypothesis is $H_0: \bm \delta = \bm 0$, in this case, the alternative hypothesis is  $H_a: \bm \delta \neq \bm 0$, the test statistic is:
\begin{align*}
     \frac{T  - \sum_{i=1}^I \bar{\bar \pi}_i }{\sqrt{\sum_{i=1}^I \bar{\pi}_i (1 - \bar{\bar \pi}_i)  } }
\end{align*}
where $T = \sum_{i=1}^I C_i$. Since the test statistics under the null hypothesis asymptotically follows a standard normal distribution with the additional assumption that the data is identically distributed. Now suppose that the alternative hypothesis is true and $A_0 = a_*$, then the power is 
\begin{align*}
    B(a_*) & = \mathbb{P}\left( \frac{T - a_* + a_* - \sum_{i=1}^I \bar{\bar \pi}_i }{\sqrt{\sum_{i=1}^I \bar{\pi}_i (1 - \bar{\bar \pi}_i)  } }  > Z_{1 - \alpha} \middle| A_0 = a_* \right) \\
    & = \mathbb{P}\left(\frac{T - a_*  - \sum_{i=1}^I \bar{\bar \pi}_i }{\sqrt{\sum_{i=1}^I \bar{\bar \pi}_i (1 - \bar{\bar \pi}_i)  } } > Z_{1 - \alpha} - \frac{ a_*   }{\sqrt{\sum_{i=1}^I \bar{\bar \pi}_i (1 - \bar{\bar \pi}_i)  } }  \middle| A_0 = a_* \right) \\
    & = 1 - \mathbb{P} \left(\frac{T - a_*  - \sum_{i=1}^I \bar{\bar \pi}_i }{\sqrt{\sum_{i=1}^I \bar{\bar \pi}_i (1 - \bar{\bar \pi}_i)  } } \leq  Z_{1 - \alpha} - \frac{ a_*   }{\sqrt{\sum_{i=1}^I \bar{\bar \pi}_i (1 - \bar{\bar \pi}_i)  } }  \middle| A_0 = a_* \right) \\
    & = 1 - \Phi \left(Z_{1 - \alpha} - \frac{ a_*   }{\sqrt{\sum_{i=1}^I \bar{\bar \pi}_i (1 - \bar{\bar \pi}_i)  } }  \right)
\end{align*}
where $Z_{1-\alpha}$ is the $Z$-score at $1 - \alpha$ level. Above is the power calculation without considering additional case description information. When case description information is used,  assume that the null hypothesis will be rejected at the significance level $\alpha$ again, and the alternative hypothesis is true and $A_0^{(k)} = a_*^{(k)}$ in $\mathcal{C}_k$ for $k = 1, 2, \cdots,L$, and $T_k = \sum_{i \in \mathcal{C}_k}C_i$ , analogously, 

\begin{align*}
   & \qquad B(a_*^{(1)}, a_*^{(2)}, \cdots, a_*^{(L)}) \\
    & = \mathbb{P} \left( \sum_{k=1}^L \frac{T_k - a_*^{(k)} + a_*^{(k)} - \sum_{i \in \mathcal{C}_k } \bar{\bar \pi}_i  }{ \sqrt{\sum_{i \in \mathcal{C}_k} \bar{\bar \pi}_i (1 - \bar{\bar \pi}_i)  }   } \middle/ \sqrt{L} > Z_{1 - \alpha} \middle| A_0^{(1)} = a_*^{(1)}, A_0^{(2)} = a_*^{(2)} \cdots A_0^{(L)} =  a_*^{(L)}   \right) \\
        & = 1 - \mathbb{P} \left( \sum_{k=1}^L  \frac{T_k - a_*  - \sum_{i \in \mathcal{C}_k} \bar{\bar \pi}_i }{\sqrt{\sum_{i \in \mathcal{C}_k} \bar{\bar \pi}_i (1 - \bar{\bar \pi}_i)  } } \leq  Z_{1 - \alpha} - \sum_{k=1}^L \frac{ a_*^{(k)} }{ \sqrt{\sum_{i \in \mathcal{C}_k} \bar{\bar \pi}_i (1 - \bar{\bar \pi}_i)  } }  \middle| A_0^{(1)} = a_*^{(1)}, \cdots A_0^{(L)} =  a_*^{(L)} \right) \\
    & = 1 - \Phi \left( Z_{1 - \alpha} - \sum_{k=1}^L \frac{ a_*^{(k)} }{ \sqrt{\sum_{i \in \mathcal{C}_k} \bar{\bar \pi}_i (1 - \bar{\bar \pi}_i)  }   } \middle/ \sqrt{L}   \right)
\end{align*}
by Stouffer's Z-score method, and 
\begin{align*}
  & \qquad    B(a_*^{(1)}, a_*^{(2)}, \cdots, a_*^{(L)}) \\
      & = \mathbb{P} \left( \sum_{k=1}^L w_k \frac{T_k - a_*^{(k)} + a_*^{(k)} - \sum_{i \in \mathcal{C}_k } \bar{\bar \pi}_i  }{ \sqrt{\sum_{i \in \mathcal{C}_k} \bar{\bar \pi}_i (1 - \bar{\bar \pi}_i)  }   } \middle/ \sqrt{\sum_{k=1}^L w_k^2} > Z_{1 - \alpha} \middle| A_0^{(1)} = a_*^{(1)}, A_0^{(2)} = a_*^{(2)} \cdots A_0^{(L)} =  a_*^{(L)}   \right) \\
         &= 1 - \Phi \left( Z_{1 - \alpha} - \sum_{k=1}^L  \frac{ w_k a_*^{(k)} }{ \sqrt{\sum_{i \in \mathcal{C}_k} \bar{\bar \pi}_i (1 - \bar{\bar \pi}_i)  }   } \middle/ \sqrt{\sum_{k=1}^L w_k^2}   \right)
\end{align*}
where $w_k = \sqrt{|\mathcal{C}_k|}$ by weighted Stouffer's Z-score method. To investigate when using case description information will increase the statistical power, we propose the following condition, that is, if 
the sum of attributable effects obtained from these subtypes is greater than the attributable effects obtained from overall population without using case desription information, then using the case definition information leads higher statistical power of sensitivity analysis. To see this, let $k^* = \argmax_k \sum_{i \in \mathcal{C}_k}\bar{\bar \pi}_i (1 - \bar{\bar \pi}_i)  $, we observe that
\begin{align*}
    \sum_{k=1}^L \frac{ a_*^{(k)} }{ \sqrt{\sum_{i \in \mathcal{C}_k} \bar{\bar \pi}_i (1 - \bar{\bar \pi}_i)  }} \geq  \frac{\sum_{k=1}^L  a_*^{(k)} }{  \sqrt{ \sum_{i \in \mathcal{C}_{k^*}} \bar{\bar \pi}_i (1 - \bar{\bar \pi}_i)  }} \geq  \frac{ a_*   }{\sqrt{\sum_{i=1}^I \bar{\bar \pi}_i (1 - \bar{\bar \pi}_i)  } }
\end{align*}
if $\sum_{k=1}^L  a_*^{(k)} \geq a_*$. Hence, $B(a_*^{(1)}, a_*^{(2)}, \cdots, a_*^{(L)}) \geq B(a_*)$ which implies a higher statistical power of sensitivity analysis.

\section*{Appendix VI: Summary Data of Case Study}

Table \ref{table:7} provides summary data for the case study.

\begin{table}
\hspace{-2cm}
\caption{\label{table:7} Top: Frequency of exposure to hormone sensitive invasive breast cancer with 3,154 pairs, the odds ratio is 2.00 (95\% CI: $[1.39, 2.88]$). Middle: Frequency of exposure to hormone insensitive invasive breast cancer with 892 pairs, the odds ratio is 0.71 (95\% CI: $[0.37, 1.38]$).  Bottom: Frequency of exposure to invasive breast cancer with 4,046 pairs, the odds ratio is 1.58 (95\% CI: $[1.16, 2.16]$).  }
\centering
\begin{tabular}{|| c c c ||}
\hline
 Outcome & Non-invasive breast cancer ($+$)  & Non-invasive breast cancer ($-$)      \\
 \hline\hline
Invasive breast cancer ($+$) & $1$ & $86$    \\ 
Invasive breast cancer ($-$) & $43$ & $3024$   \\  
 \hline
 \hline
 Outcome & Non-invasive breast cancer ($+$)  & Non-invasive breast cancer ($-$)      \\
 \hline\hline
Invasive breast cancer ($+$) & $1$ & $15$    \\ 
Invasive breast cancer ($-$) & $21$ & $855$   \\  
 \hline
 \hline
 Outcome & Non-invasive breast cancer ($+$)  & Non-invasive breast cancer ($-$)      \\
 \hline\hline
Invasive breast cancer ($+$) & $2$ & $101$    \\ 
Invasive breast cancer ($-$) & $64$ & $3879$   \\  
 \hline
\end{tabular}
\end{table}

\section*{Appendix VII: Additional Simulations}

Table \ref{table:11} and \ref{table:12} provide additional simulation results. 


\begin{table}
\caption{\label{table:11} Power simulations include directly merged method, Stouffer's $Z$-score method, Fisher's method, truncated method with $\tilde{\alpha} = 0.05$, and Bonferronni method. Each situation is sampled $200$ times. There are $500$ data points in each dataset with average treatment effects $\delta_1$ and $\delta_2$. $\Theta = 1.1$.  }
\centering
\scalebox{0.7}{
\begin{tabular}{||c c c c c c c c c c||} 
 \hline
 \multicolumn{10}{||c||}{Equal effects: $\delta_1 = 0.2, \delta_2 = 0.2$}   \\
 \hline \hline
   & $\Gamma = 1.0$  & $\Gamma = 1.5$  & $\Gamma = 2.0$  & $\Gamma = 2.5$  & $\Gamma = 3$ & $\Gamma = 3.5$ & $\Gamma = 4$ & $\Gamma = 4.5$ & $\Gamma = 5$  \\ 
 \hline
 Merged & 1.000 & 1.000 & 1.000 & 0.970 & 0.665 & 0.320 & 0.105 & 0.020 & 0.010 \\
 Stouffer & 1.000 &  1.000 & 1.000 & 0.960 & 0.660 & 0.300 & 0.105 & 0.030 & 0.010 \\ 
 Fisher & 1.000 &  1.000 & 1.000 & 0.950 & 0.610 & 0.280 & 0.100 & 0.030 & 0.010\\ 
 Truncated $\tilde{\alpha} = 0.05$ & 1.000 &  1.000 & 0.995 & 0.875 & 0.485 & 0.235 & 0.070 & 0.040 & 0.010 \\
 Bonferroni & 1.000 &  1.000 & 0.995 & 0.835 & 0.450 & 0.230 & 0.070 & 0.040 & 0.010\\
 \hline \hline

 \multicolumn{10}{||c||}{Slightly unequal effects: $\delta_1 = 0.3, \delta_2 = 0.2$} \\
 \hline \hline
   & $\Gamma = 1.0$  & $\Gamma = 1.5$  & $\Gamma = 2.0$  & $\Gamma = 2.5$  & $\Gamma = 3$ & $\Gamma = 3.5$ & $\Gamma = 4$ & $\Gamma = 4.5$ & $\Gamma = 5$  \\ 
 \hline
 Merged & 1.000 &  1.000 & 1.000 & 1.000 & 0.920 & 0.645 & 0.395 & 0.205 & 0.080 \\
 Stouffer & 1.000 &  1.000 & 1.000 & 0.995 & 0.945 & 0.720 & 0.420 & 0.210 & 0.110 \\ 
 Fisher & 1.000 &  1.000 &  1.000 & 0.995 & 0.915 & 0.665 & 0.370 & 0.185 & 0.100\\ 
 Truncated $\tilde{\alpha} = 0.05$ & 1.000 &  1.000 &  1.000 & 0.975 & 0.855 & 0.600 & 0.330 & 0.155 & 0.085 \\
 Bonferroni & 1.000 &  1.000 &  1.000 & 0.975 & 0.845 & 0.575 & 0.325 & 0.150 & 0.085\\
 \hline \hline
 \multicolumn{10}{||c||}{Unequal effects: $\delta_1 = 0.6, \delta_2 = 0.2$} \\
 \hline \hline
   & $\Gamma = 1.0$  & $\Gamma = 1.5$  & $\Gamma = 2.0$  & $\Gamma = 2.5$  & $\Gamma = 3$ & $\Gamma = 3.5$ & $\Gamma = 4$ & $\Gamma = 4.5$ & $\Gamma = 5$  \\ 
 \hline
 Merged & 1.000 &  1.000 &    1.000 &  1.000 &  1.000 & 0.985 & 0.945 & 0.715 & 0.565 \\
 Stouffer & 1.000 &  1.000 &    1.000 &  1.000 &    1.000 & 1.000 & 0.980 & 0.895 & 0.660 \\ 
 Fisher & 1.000 &  1.000 &    1.000 &  1.000 &    1.000 & 1.000 & 1.000 & 0.995 & 0.970 \\ 
 Truncated $\tilde{\alpha} = 0.05$ & 1.000 &  1.000 &  1.000 &  1.000 &  1.000 & 1.000 & 1.000 & 1.000 & 0.990 \\ 
 Bonferroni & 1.000 &  1.000 &  1.000 & 1.000 & 1.000 & 1.000 & 1.000 & 1.000 & 0.990 \\
 \hline \hline
 \multicolumn{10}{||c||}{Effect only in one dataset: $\delta_1 = 0.6, \delta_2 = 0.0$} \\
 \hline \hline
   & $\Gamma = 1.0$  & $\Gamma = 1.5$  & $\Gamma = 2.0$  & $\Gamma = 2.5$  & $\Gamma = 3$ & $\Gamma = 3.5$ & $\Gamma = 4$ & $\Gamma = 4.5$ & $\Gamma = 5$  \\ 
 \hline
 Merged & 1.000 &  1.000 &  1.000 &  1.000 &    1.000 & 0.955 & 0.840 & 0.600 & 0.360 \\
   Stouffer & 1.000 &  1.000 &  1.000 &  1.000 & 1.000 & 0.990 & 0.945 & 0.770 & 0.550 \\ 
 Fisher & 1.000 &  1.000 &  1.000 &  1.000 & 1.000 & 1.000 & 1.000 & 0.990 & 0.970\\ 
 Truncated $\tilde{\alpha} = 0.05$ & 1.000 &  1.000 &  1.000 &  1.000 &    1.000 & 1.000 & 1.000 & 1.000 & 0.990 \\  
 Bonferroni & 1.000 & 1.000 & 1.000 & 1.000 & 1.000 & 1.000 & 1.000 & 1.000 & 0.990\\
 \hline 
\end{tabular}}
\end{table}

\begin{table}
\caption{\label{table:12} Power simulations include directly merged method, Stouffer's $Z$-score method, Fisher's method, truncated method with $\tilde{\alpha} = 0.05$, and Bonferronni method. Each situation is sampled $200$ times. There are $500$ data points in each dataset with average treatment effects $\delta_1$ and $\delta_2$. $\Theta = 1.2$.   }
\centering
\scalebox{0.7}{
\begin{tabular}{||c c c c c c c c c c||} 
 \hline
 \multicolumn{10}{||c||}{Equal effects: $\delta_1 = 0.2, \delta_2 = 0.2$}   \\
 \hline \hline
   & $\Gamma = 1.0$  & $\Gamma = 1.5$  & $\Gamma = 2.0$  & $\Gamma = 2.5$  & $\Gamma = 3$ & $\Gamma = 3.5$ & $\Gamma = 4$ & $\Gamma = 4.5$ & $\Gamma = 5$  \\ 
 \hline
 Merged & 1.000 &  1.000 & 0.990 & 0.890 & 0.545 & 0.210 & 0.055 & 0.020 &  0.000  \\
 Stouffer & 1.000 &  1.000 & 0.990 & 0.895 & 0.540 & 0.200 & 0.045 & 0.020 &    0.000 \\ 
 Fisher & 1.000 &  1.000 & 0.995 & 0.855 & 0.505 & 0.185 & 0.035 & 0.015 &  0.000 \\ 
 Truncated $\tilde{\alpha} = 0.05$ & 1.000 &  1.000 & 0.995 & 0.770 & 0.415 & 0.140 & 0.045 & 0.005 &  0.000 \\
 Bonferroni  & 1.000 & 1.000 & 0.985 & 0.745 & 0.395 & 0.130 & 0.040 & 0.005 & 0.000\\
 \hline \hline

 \multicolumn{10}{||c||}{Slightly unequal effects: $\delta_1 = 0.4, \delta_2 = 0.2$} \\
 \hline \hline
   & $\Gamma = 1.0$  & $\Gamma = 1.5$  & $\Gamma = 2.0$  & $\Gamma = 2.5$  & $\Gamma = 3$ & $\Gamma = 3.5$ & $\Gamma = 4$ & $\Gamma = 4.5$ & $\Gamma = 5$  \\ 
 \hline
 Merged & 1.000 &  1.000 & 1.000 & 0.980 & 0.775 & 0.480 & 0.175 & 0.075 & 0.030 \\
 Stouffer & 1.000 &  1.000 & 1.000 & 0.990 & 0.795 & 0.510 & 0.195 & 0.105 & 0.030 \\ 
 Fisher & 1.000 &  1.000 & 1.000 & 0.955 & 0.795 & 0.480 & 0.185 & 0.070 & 0.030\\ 
 Truncated $\tilde{\alpha} = 0.05$ & 1.000& 1.000 & 0.995 & 0.900 & 0.680 & 0.400 & 0.195 & 0.055 & 0.030 \\ 
 Bonferroni & 1.000 & 1.000 & 0.995 & 0.890 & 0.660 & 0.400 & 0.195 & 0.055 & 0.030\\
 \hline \hline
 \multicolumn{10}{||c||}{Unequal effects: $\delta_1 = 0.6, \delta_2 = 0.2$} \\
 \hline \hline
   & $\Gamma = 1.0$  & $\Gamma = 1.5$  & $\Gamma = 2.0$  & $\Gamma = 2.5$  & $\Gamma = 3$ & $\Gamma = 3.5$ & $\Gamma = 4$ & $\Gamma = 4.5$ & $\Gamma = 5$  \\ 
 \hline
 Merged & 1.000 &  1.000 &  1.000 &  1.000 &  1.000 & 0.990 & 0.875 & 0.685 & 0.440 \\
 Stouffer & 1.000 &  1.000 & 1.000 &  1.000 &  1.000 & 1.000 & 0.960 & 0.820 & 0.630 \\ 
 Fisher & 1.000 &  1.000 & 1.000 &  1.000 &  1.000 & 1.000 & 1.000 & 0.990 & 0.965 \\ 
 Truncated $\tilde{\alpha} = 0.05$ & 1.000 &  1.000 &  1.000 & 1.000 & 1.000 & 1.000 & 1.000 & 1.000 & 0.995 \\ 
 Bonferroni & 1.000 &  1.000 &  1.000 & 1.000 & 1.000 & 1.000 & 1.000 & 1.000 & 0.995 \\
 \hline \hline
 \multicolumn{10}{||c||}{Effect only in one dataset: $\delta_1 = 0.6, \delta_2 = 0.0$} \\
 \hline \hline
   & $\Gamma = 1.0$  & $\Gamma = 1.5$  & $\Gamma = 2.0$  & $\Gamma = 2.5$  & $\Gamma = 3$ & $\Gamma = 3.5$ & $\Gamma = 4$ & $\Gamma = 4.5$ & $\Gamma = 5$  \\ 
 \hline
 Merged & 1.000 &  1.000 &  1.000 &  1.000 &  1.000 & 0.905 & 0.640 & 0.425 & 0.180 \\
   Stouffer & 1.000 &  1.000 & 1.000 &  1.000 &  1.000 & 0.955 & 0.840 & 0.590 & 0.330 \\ 
 Fisher & 1.000 &  1.000 & 1.000 &  1.000 &   1.000 & 1.000 & 1.000 & 0.995 & 0.940\\ 
 Truncated $\tilde{\alpha} = 0.05$ & 1.000 & 1.000 & 1.000 & 1.000 & 1.000 &  1.000 &  1.000 &  1.000 &  0.995 \\  
 Bonferroni & 1.000 & 1.000 & 1.000 & 1.000 & 1.000 & 1.000 & 1.000 & 1.000 & 0.995\\
 \hline 
\end{tabular}}
\end{table}

\end{document}